\newtheorem{theorem}{Theorem}[section]
\newtheorem{proposition}[theorem]{Proposition}
\newtheorem{algo}[theorem]{Algorithm}
\newcommand{\real}{\mathbb{R}}
\begin{document}

\linespread{1.5}
\parindent=25pt
\renewcommand{\theequation}{\thesection.\arabic{equation}}
\pagestyle{plain}

\title{Risk-Control Strategies }

\author{Patrice Gaillardetz\thanks{Patrice Gaillardetz, Ph.D.,
is an Associate Professor in Department of Mathematics and
Statistics at Concordia University, Montreal, Quebec H3G 1M8,
Canada, e-mail: patrice.gaillardetz@mathstat.concordia.ca } 
\and Saeb Hachem\thanks{Saeb Hachem, Ph.D. e-mail: saeb@videotron.ca} }

\date{}

\maketitle \noindent\hrulefill

\begin{abstract}

In this paper, we consider the pricing of derivative products that involve dynamic hedging strategies and payments within the planning horizon. Equity-indexed annuities (EIAs), Guaranteed investment certificate (GIC), American and Barrier options are typical examples of these products. Our exploration involves evaluation under different assumptions related to the way the risk is tailored by the issuer. The unified constrained discrete stochastic dynamic programming framework presented in this paper makes use of sequential local minimizing strategies related to stochastic transitions. This sequential minimizations takes into account all intermediate requirements and involves several dynamic risk measures modelling. To demonstrate the flexibility of this framework we present numerical examples featuring GICs and point-to-point EIAs.
\end{abstract}

\noindent\hrulefill\\
Partial hedging; Local risk-minimizing strategies; Stochastic dynamic programming; Linear programming; Parametric linear programming; Risk measures\\
%\noindent JEL Classification: Primary G22, Secondary G32
\newpage

\section{Introduction}

\setcounter{equation}{0}

A unified stochastic dynamic programming framework is proposed for pricing some life insurance and financial products such as the equity-linked products, guaranteed investment certificate (GIC) and other financial contingencies.  These problems are basically valuation of derivative products that involve payments within the planning horizon.  The main goal is to identify dynamic hedging strategies that  control the risk. 

The concept of risk-minimizing was first introduced by \cite{Follmer86}. In spite of its computational simplicity, the square of the loss does not distinguish positive losses from negative ones. A more meaningful approach is to set the hedging strategy using the quantile hedging approach introduced by \cite{Follmer99}; this hedging method is in fact equivalent to the value-at-risk (VaR) risk measure. \cite{Follmer00} extend this idea and construct hedging strategies that minimize positive losses. \cite{Rockafellar} set their hedging strategy by minimizing the conditional-value-at-risk (CVaR) and also by controlling this risk measure in the constraint.

The proposed framework is a flexible alternative to the local risk-minimization approach proposed by \cite{Schweizer88} and \cite{Follmer88}, where they minimize the square of the mismatches process sequentially. \cite{Coleman06} apply this approach to equity-linked products where they also consider options in their hedging portfolio. The local risk-minimizing strategy has been generalized to convex functions in \cite{Abergel11}. \cite{Gaillardetz17} generalize the local risk-minimizing approach using risk measures. They show that their strategies could outperform the quadratic approach.

The optimality of dynamic hedging strategies is taken according to some dynamic portfolio management criteria using different risk measurements. For the issuer agent or company, the use of a common platform and resolution techniques, based of dynamic programming, allows evaluating the relative impact of different dynamic risk management models and pricing different products in actuarial and financial sciences. 

The nuclear of our backward stochastic dynamic models are the optimization problems related to temporary conditional transitions of the multidimensional random process. The control decisions are the corresponding amounts of assets held in the portfolio. The state variables, the objective function, and the constraints are different according to the way the risk is measured and controlled; the modelling complexity increases with the number of assets, the transaction costs, the multiple ways of controlling the risk, etc.

The stochastic setting is assumed to be discrete and described by a tree. Over time, the portfolio updates occur in accordance with the new information related to transitions.  For every transition, some ``controllable" losses are tolerated. Self-financing is hence guaranteed with a high level of confidence but not for all outcomes. Similar to portfolio management problems, this can be done either by:
\begin{itemize}
\item minimizing risk measures with the value of the portfolio as a state constraint;
\item minimizing the value of portfolios under local risk measure constraint. 
\end{itemize}

The latter generalizes the approach proposed by \cite{Gaillardetz17} and rephrases their work in terms of linear programming (LP) optimization. For the conditional value-at-risk, we use the linear modelling of \cite{Rockafellar}.

Among models with risk measure in the objective, we consider three different approaches:
\begin{itemize}
\item minimizing the local risk measure with constraints on future risk measures;
\item minimizing the weighted average between the local risk measure and the future risk measures; 
\item minimizing the coherent dynamic risk measure introduced by \cite{Riedel04}.
\end{itemize}

We will study, theoretically and numerically, the impact on the pricing of these fundamental models and compare their relative performances. We also give some recommendations to calibrate the risk measure parameters. 

One of our main contributions is that the proposed framework allows the involvement of multiple assets in the hedging portfolio. Numerical tests show a significant reduction of the derivative price by adding assets. The proposed flexible algorithms work for both recombining trees and unfolded trees. We also introduce constraints that could reinforce the local risk management control. Through state constraints and variables, the total losses all along paths or scenarios could be controlled.   

The paper is organized as follow. The first two sections introduce the valuation framework and hedging portfolios. Section 4 introduces the optimization algorithms that allow hedging portfolio selections. It presents several technics where the risk measure is used in the constraint and in the objective function. Section 5 concludes the article with the numerical implementations.

\section{Financial Framework }

In this paper, to proceed with numerical calculations, we assume that the input random process is a multidimensional discrete process with a finite number of realizations. Depicting this process as an event tree is convenient for stochastic dynamic optimization; the nodes of the tree represent the possible values of the random process and the arcs the transitions, i.e. possible continuations of history. 
If the random process is Markovian, modelling the discrete process as a recombining tree leads to a significant reduction in the computational complexity, by comparison to represent the process as an unfolded tree or a scenario tree. The unfolded tree is required to model more general stochastic processes where the evolution of the underlying quantities is strongly path-dependent. The main difference between these two categories of trees lies in the fact that in a recombining tree many arcs could lead to a node, which implies many possible histories, while only one arc leads to a node in an unfolded tree, to translate the long dependence on history.  Figure \ref{f_tree} shows both trees over 3 periods or 3 transitions.

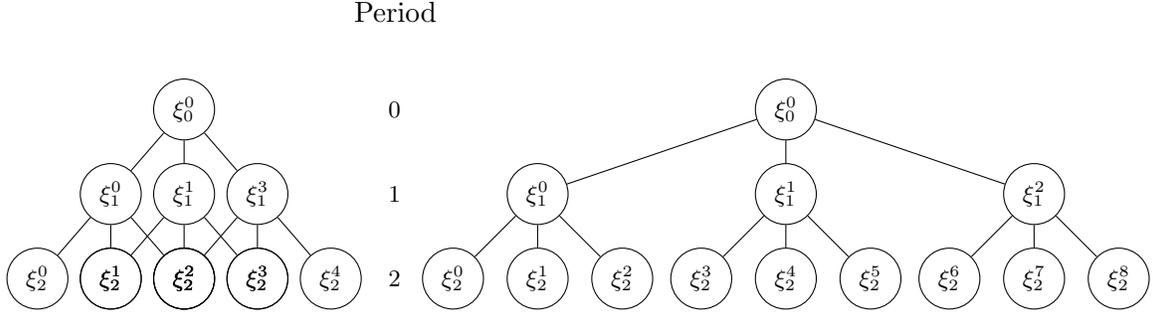
\begin{figure}

\hspace{5.1cm} Period
\\

\hspace{-2cm} 
\begin{tikzpicture}[scale=0.75,font=\footnotesize, rotate=-90]

  \tikzstyle{solid node}=[circle,draw,inner sep=1.5,fill=black]
  \tikzstyle{hollow node}=[circle,draw,inner sep=1.5]
  \tikzstyle{level 1}=[level distance=15mm,sibling distance=13mm]
  \tikzstyle{level 2}=[level distance=15mm,sibling distance=13mm]
  \tikzstyle{level 3}=[level distance=15mm,sibling distance=13mm]
  \vspace{15mm}
  \node[circle,draw]{$\xi^0_0$}[grow=east]
    child{node[circle,draw]{$\xi^0_1$}
      child{node[circle,draw]{$\xi^0_2$} edge from parent node[left]{}}
	 child{node[circle,draw]{$\xi^1_2$} edge from parent node[left]{}}
      child{node[circle,draw]{$\xi^2_2$} edge from parent node[left]{}}
             %     edge from parent node[left,xshift=-5]{$A$}
    }
    child{node[circle,draw]{$\xi^1_1$}
      child{node[circle,draw]{$\xi^1_2$} }
	 child{node[circle,draw]{$\xi^2_2$} edge from parent node[left]{}}
     child{node[circle,draw]{$\xi^3_2$} edge from parent node[left]{}}
             %     edge from parent node[left,xshift=-5]{$A$}
    }
    child{node[circle,draw]{$\xi^3_1$}
      child{node[circle,draw]{$\xi^2_2$} }
        child{node[circle,draw]{$\xi^3_2$} edge from parent node[left]{}}
    child{node[circle,draw]{$\xi^4_2$} edge from parent node[right]{}}
 %     edge from parent node[right,xshift=5]{$B$}
    };
%\end{tikzpicture}

 %\caption{Recombining tree.}
 %\label{fl}
%\end{figure}

%\begin{figure}[h!]
%\begin{tikzpicture}[scale=0.75,font=\footnotesize,rotate=-90]

\hspace*{2.8cm} 

%\begin{tikzpicture}[scale=0.75,font=\footnotesize, rotate=-90]
  \tikzstyle{solid node}=[circle,draw,inner sep=1.5,fill=black]
  \tikzstyle{hollow node}=[circle,draw,inner sep=1.5]
  \tikzstyle{level 1}=[level distance=15mm]
  \tikzstyle{level 2}=[level distance=15mm]
  \tikzstyle{level 3}=[level distance=15mm]
  \node{0}[grow=east]
    child{node{$1$} edge from parent[draw=none]
      child{node{$2$} edge from parent[draw=none] }};

\hspace*{5.2cm}  
  \tikzstyle{solid node}=[circle,draw,inner sep=1.5,fill=black]
  \tikzstyle{hollow node}=[circle,draw,inner sep=1.5]
  \tikzstyle{level 1}=[level distance=15mm,sibling distance=4.4 cm]
  \tikzstyle{level 2}=[level distance=15mm,sibling distance=1.5cm]
  \tikzstyle{level 3}=[level distance=15mm,sibling distance=5mm]
  \node(0)[circle,draw]{$\xi^0_0$}[grow=east]
    child{node[circle,draw]{$\xi^0_1$}
      child{node[circle,draw]{$\xi^0_2$} edge from parent node[left]{}}
	 child{node[circle,draw]{$\xi^1_2$} edge from parent node[left]{}}
      child{node[circle,draw]{$\xi^2_2$} edge from parent node[left]{}}
             %     edge from parent node[left,xshift=-5]{$A$}
    }
    child{node[circle,draw]{$\xi^1_1$}
      child{node[circle,draw]{$\xi^3_2$} }
	 child{node[circle,draw]{$\xi^4_2$} edge from parent node[left]{}}
     child{node[circle,draw]{$\xi^5_2$} edge from parent node[left]{}}
             %     edge from parent node[left,xshift=-5]{$A$}
    }
    child{node[circle,draw]{$\xi^2_1$}
      child{node[circle,draw]{$\xi^6_2$} }
        child{node[circle,draw]{$\xi^7_2$} edge from parent node[left]{}}
    child{node[circle,draw]{$\xi^8_2$} edge from parent node[right]{}}
 %     edge from parent node[right,xshift=5]{$B$}
    };
\end{tikzpicture}
 \caption{Recombining and unfolded trinomial trees.}
 \label{f_tree}
\end{figure}

Lattice models have been intensively used to describe stocks, stock indices, interest rates, and other financial securities due to their flexibility and tractability. They also arise naturally when dealing with holders' lives since they are described using counting processes. 

As for our algorithm, the computational efforts are proportionate to the number of nodes, the running time grows linearly with the number of periods in the case of a recombining tree. Otherwise, it grows exponentially and faces the usual curse of dimensionality. 

Although this is not necessary for our framework, we assume for the sake of notational simplicity that the planning horizon is cut off into successive dates, $t=0,1,\cdots, T$, and the values of the random process, $\xi_0$, $\xi _1$, ..., $\xi _T$ are successively known at this sequence of dates. Hence, for a given date, whatever the value of the random process is, the duration of any next transition is constant (an alternative way to building tree is to use time-varying transition lengths). These assumptions translate into a tree having a finite number of nodes organized in a finite number of levels associated to dates and the transitioning represented by arcs is made from one level to the next (see Figure \ref{f_tree}).

Let $i_t$ be the node index number $i$ of the period $t$ and $\xi^i_t$ the value of the stochastic process relative to the node $i_t$. For $t=0$, the unique node, or the root node, is associated with the known value of the stochastic process $\xi^0_0$. At time $t=1$, we have as many nodes as possible values of $\xi _1$. To mark the continuation of history, each of these nodes is connected by an arc to the root node. Generally, at level $t, t>0$, the set of nodes $I_t$ includes all possible values of $\xi _t$, each node $i$ is connected to at least one node $j$ of level $t-1$. For a non-recombining tree, this connection is unique. %Ajuster les arbres aver la notation
%, the value of the process at node $i$ is $\xi^i_t$

Let $I_t|i_{t-1}$ denote a set of nodes including all the realization of $\xi _t$ given parent node $i_{t-1}$. Let $p_{i_t|i_{t-1}}$ be the conditional probability of moving from the node $i_{t-1}$ to  $i_t$ and $ E_{i_{t-1}}[.]$ the associated conditional expected value operator, that is
\[
E_{i_{t-1}} [f(\xi_t)]=\sum_{i_t \in I_t|i_{t-1}} f(\xi^{i}_t)p_{i_t|i_{t-1}}, 
\]
where $f$ is un function of the random variable $\xi_t$.

$\xi^i _t$ is, in general, a multivariate data vector. For guaranteed investment certificate (GIC) the vector is a unidimensional data vector when the contacts depend only on the stock price. If the contracts depend on multiple assets or stochastic interest rate, the vector is multi dimensional. For the equity-indexed annuities (EIAs) problem, the data vector needs to add one dimension for the policyholders' cohort. In both cases, the dimension related to the random stock price is denoted by $S_{i_t}$, which is the stock value at the node $i_t$.  

Although the proposed framework can easily handle stochastic interest rates,  we assume for the sake of simplicity that the interest rate is deterministic. Let $r$  be the force of interest, i.e. $r$ is a nominal rate of interest compounded continuously.

\section{Hedging Portfolio and Loss Function}   
\label{hplf}

%The stock process that represents the value of the stock at time $t$ is denoted by $S(t), t = 0, 1, 2, \cdots$, which is real-valued functions where $S(0)$ is the initial level of the stock. The model assumes that the stock could be traded in each period and the usual frictionless financial market: no tax, no transaction costs, etc.  

%????????????????????????????????
For a given transition, the stochastic process losses are defined as follow. Let $x_{i_t}$ be a vector of control variables, which are assets composing the hedging strategy held by the issuer for the outcome $i_t$. Let $F_{i_{t}}(x_{i_t})$ denote the function giving the value of an investment strategy involving $x_{i_t}$. If for example,  $x_{i_t}$ is equal to $(a_{i_t}, b_{i_t}, c_{i_t} )$ where $ a_{i_t}$, $b_{i_t}$, and $c_{i_t}$ are respectively the amounts of stocks, cash, and European call options held by the issuer, then $F_{i_{t}}(x_{i_t})=a_{i_t}+b_{i_t}+c_{i_t}$. The sets of investments could have more components such as other European options or multiples risky assets. All along this paper for modelling and illustrative purposes we consider this three dimensional set of investment.

 %For example, an American call option exercise price is the difference between stock and the strike price if this expression is positive, otherwise it is equal to 0. 
 
Let $C_{i_t}$ the continuation value, which is the amount required to pursue the issuer's operations beyond $i_t$, which will be specified for each model. Depending on the dynamic optimization problem, $C_{i_t}$ could be function depending on state variables. Let $P_{i_t}$ the guaranteed payable benefit relative to node ${i_t}$ and $G_{i_{t}}(P_{i_t} , C_{i_t})$ summarizes the needed amount required at node $i_t$. In the case of GICs, the benefit is paid  only at maturity ($T$) and for all intermediate periods $G_{i_{t}}(P_{i_t} , C_{i_t})= C_{i_t}$. In the case of EIAs, $G_{i_t}$ is the sum of the death benefits and the continuation value for the survivors, $G_{i_{t}}(P_{i_t} , C_{i_t})=P_{i_t} + C_{i_t}$.

Let $z_{i_t}$ denote the state vector relative to node $i_t$ and $H_{i_{t}}( z_{i_{t-1}},z_{i_t} , x_{i_{t-1}},\xi^i_t, G_{i_t} )$ is a multi-function that describes the dynamic states evolution. For example, it could be used to introduce transaction costs on assets. In the case of transaction fees on an asset, we have to distinguish between the stock already held $a_{i_{t-1}}$ and the variation $\delta_ {a_{i_{t-1}}}$ since the fees affect the variation only. Hence, $a_{i_{t-1}}$ becomes a component of the sate vector $z_{i_{t-1}}$ and  $\delta_ {a_{i_{t-1}}}$ is a component of the control vector $x_{i_{t-1}}$. The state equation related to the state variable $a_{i_{t-1}}$ is $ a_{i_t} = a_{i_{t-1}} + \delta_{a_{i_t}}$. If the hedge portfolio includes many assets with transaction fees,  many state variables and similar equations are required. In this case, the value of the investment strategy needs to be adjusted to take into account the variation $\delta_{a_{i_t}}$.

% Note that most of the functions (e.g. $L$, $W$, $C$, etc) could depend on the state vector $z_{i_t}$ if it is introduced in the algorithm. 
 
$W_{i_{t-1}} ( x_{i_{t-1}},  \xi^i_t,z_{i_{t-1}}) $ is the accumulation of hedge portfolio relative to the transition $i_t|i_{t-1}$ prior to the payment at time $t$. In other words, $W_{i_{t-1}}$ represents the value at the end of the period of the replicating portfolio set at node $i_{t-1}$.  Note that $W_{i_{t-1}}$ is a random variable function of the different outcomes $\xi^i_t$. That is for the three dimensional set of investment
\begin{align}
W_{i_{t-1}} ( x_{i_{t-1}},  \xi^i_t,z_{i_{t-1}})=a_{i_{t-1}}\frac{S_{i_t}}{S_{i_{t-1}}}+b_{i_{t-1}}e^{r}+c_{i_{t-1}}\frac{O_{i_t}(T)}{O_{i_{t-1}}(T)},\label{wtm1}
\end{align}
where $O_{i_{t-1}}(T)$ is the price of the European call option with maturity $T\geq t$ at node $i_{t-1}$. This accumulation hedge portfolio is linear with respect to the investment variables.

%of the issuer's future obligations

%For the node $i_t $, the required value is a function of the guaranteed payment benefit $P_{i_t}$ and the future capital requirement $ V_{i_t}(z_{i_t}, x_{i_t}) $.  ???
For the transition $i_t| i_{t-1}$, the issuer has a surplus if the value of investments or hedge portfolio prior to the payment is greater or equal to the required value, i.e.
\begin{align} 
 W_{i_{t-1}} ( x_{i_{t-1}}, \xi^i_t,z_{i_{t-1}} )   \geq   G_{i_{t}}(P_{i_t} , C_{i_t}).\label{WG}
\end{align}
If this inequality  is observed for all nodes, the hedging strategy refers to the super-replicating strategy. Otherwise if for some transitions this inequality is violated the issuer incurs temporary losses which, in this context, are simply the difference between the amounts required less the accumulation of the hedge portfolio.  For the transition $i_t|i_{t-1}$, the discounted loss random variable is
\begin{align}
 L_{i_{t-1}} (x_{i_{t-1}}, \xi^i_t,  G_{i_t},z_{i_{t-1}})  =  G_{i_t} (P_{i_t} , C_{i_{t}})  - W_{i_{t-1}} ( x_{i_{t-1}}, \xi^i_t,z_{i_{t-1}} ).\label{ltm1}
\end{align}
 
\begin{proposition} If $W_{i_{t-1}} $ is concave (or linear) and $G_{i_{t}} $ is convex then the loss function $L_{i_{t-1}} $ is convex. In addition if $G_{i_{t}} $ and $W_{i_{t-1}} $ are piecewise linear then $L_{i_{t-1}} $ is piecewise linear. 
 \end{proposition}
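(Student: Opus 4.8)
The plan is to read the result straight off the algebraic form of the loss in \eqref{ltm1}, namely $L_{i_{t-1}} = G_{i_t} - W_{i_{t-1}}$, and then invoke the standard closure properties of convex functions and of piecewise-linear (piecewise-affine) functions. Throughout, convexity (resp.\ piecewise linearity) should be read with respect to the decision and state variables carried by the transition, i.e.\ the pair $(x_{i_{t-1}},z_{i_{t-1}})$, with the outcome $\xi^i_t$ held fixed; since there are only finitely many outcomes, a property established for each fixed $\xi^i_t$ holds outcome-wise, which is all the local optimization needs.

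First I would dispatch the convexity claim. Write $L_{i_{t-1}} = G_{i_t} + (-W_{i_{t-1}})$. By hypothesis $W_{i_{t-1}}$ is concave, hence $-W_{i_{t-1}}$ is convex; $G_{i_t}(P_{i_t},C_{i_t})$ is convex in the same variables (and if $C_{i_t}$ enters through an affine state equation, convexity is preserved under that affine substitution, so this is not an extra hypothesis). The sum of two convex functions is convex, giving convexity of $L_{i_{t-1}}$. The parenthetical ``(or linear)'' is then subsumed, since an affine function is in particular concave; this is also the case relevant to the explicit portfolio \eqref{wtm1}, which is affine — indeed linear — in $(a_{i_{t-1}},b_{i_{t-1}},c_{i_{t-1}})$, so there $-W_{i_{t-1}}$ is affine and the conclusion is immediate.

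Next I would handle the piecewise-linear claim. Here I would first recall that the class of piecewise-linear functions on $\mathbb{R}^n$ is a vector space — closed under addition and scalar multiples — and is moreover closed under $\max$, $\min$, and precomposition with affine maps. Hence $-W_{i_{t-1}}$ is piecewise linear, and $L_{i_{t-1}} = G_{i_t} + (-W_{i_{t-1}})$ is piecewise linear; if the state equations feeding $C_{i_t}$ (through $H_{i_t}$) are affine, the composition rule again keeps the property. It is worth noting that this second assertion is genuinely separate from the first: a difference of two convex piecewise-linear functions is piecewise linear but need not be convex, so the hypothesis pattern of the two sentences differs on purpose. The computations are routine; the only point needing care — and the one I would flag up front as the main obstacle — is the bookkeeping of exactly which variables the convexity/piecewise-linearity is taken over, together with the requirement that the state-evolution map $H_{i_t}$ used to express $C_{i_t}$ (hence $G_{i_t}$) in terms of earlier variables be affine for the substitution steps to preserve structure; in the transaction-cost example following \eqref{wtm1}, where $a_{i_t}=a_{i_{t-1}}+\delta_{a_{i_t}}$, this holds, so no generality is lost in the settings considered.
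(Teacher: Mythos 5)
Your proposal is correct and follows essentially the same route as the paper: write $L_{i_{t-1}} = G_{i_t} + (-W_{i_{t-1}})$, use concavity of $W_{i_{t-1}}$ to get convexity of $-W_{i_{t-1}}$ and close under addition for the first claim, and use closure of piecewise-linear functions under addition for the second. The only difference is one of detail: where you cite the closure of piecewise-linear functions under sums as a known fact, the paper proves it on the spot by taking the common polyhedral refinement of the pieces of $G_{i_t}$ and $W_{i_{t-1}}$ and observing the sum is affine on each cell of the refinement, with finitely many cells.
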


\begin{proof}
If $W_{i_{t-1}}$ is concave, $-W_{i_{t-1}}$ is convex, and the sum of two convex functions is convex (see \cite{Rockafellar70}).

 The domain of the sum function is defined by the intersection of the domains of $G$ and W which are piecewise linear on this domain. For a given piece(polyhedral) of $G$, there is a polyhedral covering from the domains of pieces (possibly one) of $W$. Since each part of this polyhedral covering is linear, the  sum function is linear and hence the sum function is piecewise linear on this given piece of $G$. Similarly, the sum function is piecewise linear on the domain of every other piece of $G$. As the number of pieces of $G$ is finite, the sum function is piecewise linear with a finite number of pieces. It is continuous since it is convex.% Similarly the sum of two piecewise linear functions is a piece piecewise linear function.
\end{proof}

\section{Hedging Portfolio Selection}
\label{HS}

\cite{Schweizer88} propose a local risk-minimizing strategy that sequentially minimizes the square of the error process. \cite{Gaillardetz17} propose partial hedging strategies that allows some positive losses by containing risk measure to be smaller than a given threshold. These risk-control strategies can be generalized by including more constraints and variables and using linear programming technics.

All the proposed optimization algorithms set the hedging portfolio using backward dynamic programming approach. First, the optimization allowing to define the hedge portfolio is performed for all $i_{T-1}$. Based on these results, the hedge portfolio can be obtained for all possible outcomes $i_{T-2}$. This backward process leads to the initial value of the hedge portfolio. 

Let $V_{i_t}(z_{i_t})$ denote the backward cost-to-go function of the dynamic programming problem at node $i_t$.

\subsection{Probabilistic or Chance Constraint}

The first approach chooses the hedging portfolio based on probabilistic constraints. It allows controlling the loss function with some level of confidence. The probabilistic constraint guarantees that the losses are controlled with a probability of $c$. Hence, the optimization problem is given by
\begin{algo}\label{algo1}
For all $t=T, T-1,  \cdots, 1$ and all $i_{t-1  }\in I_{t-1}$,  
\begin{align}
V_{i_{t-1}} (z_{i_{t-1}})= \min_{ x_{i_{t-1}},z_{i_t} }   F( x_{i_{t-1}})  
 \end{align}
under the constraints
\begin{align}
\Pr{_{i_{t-1}}} [L_{i_{t-1}} (x_{i_{t-1}}, ~\xi^i_t,  G_{i_t}, z_{i_{t-1}}) \leq  \gamma_0)] \geq & c \label{pc1} \\  
H_{i_t}( z_{i_{t-1}}, z_{i_t},x_{i_{t-1}},\xi^i_t, G_{i_t}  )  =&0 \label{pc2}
\end{align}
where $V_{i_T}(z_{i_T})=P_{i_T}$ for all $i_T$,  and $\Pr{_{i_{t-1}}}[.]$ is the probability given $i_{t-1}$.
\end{algo}
This limits the probability of losses larger than a threshold parameter $\gamma_0$ to $1-c$.

\cite{charnes} are the first to defined disjoint chance constraint programs. \cite{Miller} study joint chance constraint program. This later work was generalized to non-independent random variables by \cite{Prekopa73}. 

Probabilistic constrained programs with discrete probability distributions can be reformulated as mixed-integer programs (\cite{Prekopa}, \cite{Ruszczynski02}, \cite{Luedtke}, and \cite{Kucukyavuz}). Mixed-integer programs are way more complex to solve than linear programs and  their cost-to-go function are generally non-convex and discontinuous. 

%The constraint (\ref{pc1}) can be defined continuous probability distributions in (\ref{pc1}), the programs can be convex or non-convex according to the distribution property (\cite{Prekopa} and ). In the most general case we use sampling and approximations of the probabilistic constraint. These methodologies are difficult to solve with parametric formulations. 

A major drawback of the probabilistic approach is that the level of the losses is not penalized and no control is imposed on these violations. Risk measures could remedy to this unpenalized violation.

\subsection{Risk Measure as Constraint}
\label{rmc}
Risk measures have been widely used by financial institutions such as insurance and investments companies to evaluate the risk level of business lines. This widespread use is mainly due to its meaningfulness in a business setting. Hence, the probabilistic constraint in the Algorithm \ref{algo1} is replaced by a risk measure. That is
\begin{algo}\label{algo2}
For all $t=T, T-1,  \cdots, 1$ and all $i_{t-1}\in I_{t-1}$,  
\begin{align}
V_{i_{t-1}} (z_{i_{t-1}})= \min_{ x_{i_{t-1}} ,z_{i_t} }   F( x_{i_{t-1}})   \label{ro}
\end{align}
Under the constraints
\begin{align}
\rho_c (L_{i_{t-1}} (x_{i_{t-1}}, ~\xi^i_t,  G_ {i_t}, z_{i_{t-1}}) )  \leq &\gamma_0  \label{rc1} \\
H_{i_t}( z_{i_{t-1}}, z_{i_t},x_{i_{t-1}},\xi^i_t , G_{i_t} )  =&0 \label{rc2}
\end{align}
where $\rho_c(.)$ is the risk measure that quantifies the total risk exposure at risk level $c$.
\end{algo}

\begin{proposition} If $\rho_c$, $F_{i_{t-1}} $ and $L_{i_{t-1}} $ are convex and $H_{i_{t}} $ is linear then the optimization problem is convex. If in addition $F_{i_{t-1}} $ and $L_{i_{t-1}} $ are linear and $\rho_c$ is piecewise linear then the optimization problem is convex piecewise linear.  
 \end{proposition}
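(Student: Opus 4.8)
The plan is to verify the two claims—convexity of the optimization problem, and then the piecewise-linear refinement—by reducing everything to standard facts about convex sets and functions, using the previous Proposition to handle the loss function.

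First I would clarify what ``the optimization problem is convex'' means here: the objective $F_{i_{t-1}}(x_{i_{t-1}})$ is to be minimized over the decision variables $(x_{i_{t-1}}, z_{i_t})$, so convexity of the problem means the objective is a convex function and the feasible region
\[
\mathcal{F} = \{(x_{i_{t-1}}, z_{i_t}) : \rho_c(L_{i_{t-1}}(x_{i_{t-1}}, \xi^i_t, G_{i_t}, z_{i_{t-1}})) \leq \gamma_0,\ H_{i_t}(\cdot) = 0\}
\]
is a convex set. The objective part is immediate from the hypothesis that $F_{i_{t-1}}$ is convex. For the feasible set I would argue as follows. The equality constraint $H_{i_t}(z_{i_{t-1}}, z_{i_t}, x_{i_{t-1}}, \xi^i_t, G_{i_t}) = 0$ defines an affine subspace (intersection of hyperplanes) in the $(x_{i_{t-1}}, z_{i_t})$ variables whenever $H_{i_t}$ is linear, hence a convex set. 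For the risk constraint, the key observation is that $x_{i_{t-1}} \mapsto L_{i_{t-1}}(x_{i_{t-1}}, \xi^i_t, G_{i_t}, z_{i_{t-1}})$ is convex (this is given, and is also the content of the earlier Proposition under the natural concavity/convexity hypotheses on $W$ and $G$), and $\rho_c$ is convex and, being a risk measure, monotone (nondecreasing): if $L \leq L'$ pointwise then $\rho_c(L) \leq \rho_c(L')$. Monotonicity composed with convexity gives that $x_{i_{t-1}} \mapsto \rho_c(L_{i_{t-1}}(x_{i_{t-1}}, \cdots))$ is convex, so its sublevel set $\{\rho_c(L_{i_{t-1}}) \leq \gamma_0\}$ is convex; intersecting with the affine set from $H_{i_t}$ keeps it convex. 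Thus $F$ is minimized over a convex set, and the problem is convex.

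For the second claim, I would layer the piecewise-linear structure on top of this. If $F_{i_{t-1}}$ and $L_{i_{t-1}}$ are linear (affine) in $x_{i_{t-1}}$ and $\rho_c$ is piecewise linear, then the composite map $x_{i_{t-1}} \mapsto \rho_c(L_{i_{t-1}}(x_{i_{t-1}}, \cdots))$ is a composition of an affine map with a piecewise-linear map, hence piecewise linear, and since it is also convex (by the first part) it is a maximum of finitely many affine functions. Its $\gamma_0$-sublevel set is therefore a finite intersection of half-spaces, i.e. a polyhedron; intersecting with the polyhedron defined by the linear constraint $H_{i_t} = 0$ still yields a polyhedron. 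Minimizing a linear objective over a polyhedron is a linear program, and such problems are exactly the ``convex piecewise linear'' problems in the sense used here (the cost-to-go $V_{i_{t-1}}$ will then itself be convex and piecewise linear as a function of $z_{i_{t-1}}$, by parametric linear programming, which is presumably exploited downstream). I would state this conclusion and note the inductive consistency: $V_{i_T}(z_{i_T}) = P_{i_T}$ is trivially convex (affine, even), so if $C_{i_t}$ and hence $G_{i_t}$ inherit convexity/piecewise-linearity from $V_{i_t}$ at the previous stage, the structure propagates, though a full backward-induction argument is more than this local proposition requires.

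The main obstacle—and the point I would be most careful to state explicitly rather than gloss over—is the monotonicity of $\rho_c$ used in the composition step. Convexity of $\rho_c$ alone does not guarantee that $\rho_c \circ L_{i_{t-1}}$ is convex when $L_{i_{t-1}}$ is merely convex (not affine); one genuinely needs $\rho_c$ to be nondecreasing with respect to the natural (almost-sure) partial order on random losses. This is a standard axiom for coherent/convex risk measures (and holds for VaR and CVaR as used elsewhere in the paper via the Rockafellar--Uryasev representation), so I would simply invoke it, but it is the hypothesis that makes the argument go through and should be acknowledged. The piecewise-linear half is then essentially bookkeeping: tracking that affine-in, piecewise-linear-out compositions stay piecewise linear and that finite intersections of polyhedra are polyhedra, all of which follows from \cite{Rockafellar70} as already cited for the previous Proposition.
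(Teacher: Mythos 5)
Your proposal is correct and follows essentially the same route as the paper: the paper's entire argument is the one-line observation that ``$\rho_c$ is convex and the composition of two convex functions $\rho_c(L_{i_{t-1}}(\cdot))$ is convex,'' with the piecewise-linear refinement declared obvious. What you add is not a different approach but a necessary repair of that one line. The blanket claim that a composition of two convex functions is convex is false (e.g.\ $f(y)=(y-1)^2$ and $g(x)=x^2$ are both convex but $f\circ g$ is not); convexity of $\rho_c\circ L_{i_{t-1}}$ requires, exactly as you say, that $\rho_c$ be nondecreasing with respect to the pointwise order on losses, which is the monotonicity axiom of the risk measures the paper actually uses (VaR, CVaR, expected downside risk). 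Your explicit treatment of the feasible set as an intersection of a convex sublevel set with the affine set $\{H_{i_t}=0\}$, and your polyhedral bookkeeping for the piecewise-linear case (max of finitely many affine functions, sublevel set a finite intersection of half-spaces, hence an LP), is a faithful expansion of what the paper leaves implicit, and your closing remark correctly connects the result to the parametric-LP structure exploited in Proposition 4.3. In short: same decomposition, but your version states the hypothesis that actually makes the composition step valid.
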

 
  \begin{proof}
Since $\rho_c$ is convex and the composition of two convex functions $\rho_c (L_{i_{t-1}} (.))$ is convex then the problem is convex.  The rest of the proof is obvious.
 \end{proof}
In the following, we introduce several risk measures (including the CVaR) that are piecewise linear.
This generalizes the approach proposed by \cite{Gaillardetz17}, where the portfolio composition can be extended to multiple assets and the possibility to use state variables.

In the case there are no state variables, the backward cost-to-go function is a constant and nothing else than the continuation value. The objective minimization of Algorithm \ref{algo2} is simply
\[
C_{i_{t-1}} = \min_{a_{i_{t-1}},b_{i_{t-1}},c_{i_{t-1}}  }   a_{i_{t-1}}+b_{i_{t-1}}+c_{i_{t-1}}. 
\]
There is no need for the state constraint (\ref{rc2}) and the constraint (\ref{rc1}) becomes $\rho_c (W_{i_{t-1}}- G_{i_t})  \leq \gamma_0 $ where the needed amount required $G_{i_t}$ is constant and $W_{i_{t-1}}$ is the linear function given by (\ref{wtm1}).

The optimization problem (\ref{ro})-(\ref{rc2}) requires different solutions depending on the specific risk measure. The most common and well-known risk measure is the Value-at-Risk (VaR). VaR is widely used due to its ease of calculation. However, incorporating the VaR in optimization problems is a challenge. The Value-at-Risk is defined as the $c$-quantile of the discounted loss function $L_{i_{t-1}}$. That is
\begin{align}
\textnormal{VaR}_c(L_{i_{t-1}}) = \inf \{ y \in \real : \Pr{_{i_{t-1}}}[L_{i_{t-1}}>y] \leq 1-c \}.
\end{align}
for $c\in[0,1]$.

The value-at-risk shares the same drawbacks as the probabilistic approach and requires some approximations to be tractable (see Chapter 4 of \cite{Aharon}) 
 
Conditional Value-at-Risk (CVaR) has been lauded as a more meaningful and appropriate coherent risk measure (see \cite{Artzner99}). Essentially, a coherent risk measure is said to possess the properties of monotonicity, sub-additivity, positive homogeneity, and translation invariance. Note that VaR does not possess the sub-additivity property and is therefore not a coherent risk measure.

The CVaR risk measure, which represents the expected value of the worst $(1-c)$ losses located at the tail delimited by VaR$_c$.

%, is given by
%\begin{align}
%\label{eqn:cvar1}
%\textnormal{CVaR}_c(L_{i_{t-1}}) =(1-c)\int_{L_{i_{t-1}}\geq \pi_c}  L% E_{i_{t-1}} [ L_{i_{t-1}}|L_{i_{t-1}}>\textnormal{VaR}_c(L_{i_{t-1}}) ].???
%\end{align}
%where $\pi_c$ is the $\textnormal{VaR}_c$.
% where $E{_{i_{t-1}}}[.]$ is the expected value with respect to $\xi^i_t$ given $i_{t-1}$.
 
According to \cite{Rockafellar}, CVaR is leading to a tractable optimization convex problem. In the case $\xi^i_t$ is discrete, the constraint (\ref{rc1}) becomes
\begin{align}
-L_{i_{t-1}} + \pi_c^+ - \pi_c^- + u_{i_t} &\geq 0,~~  \forall   i_t  \in I_t|i_{t-1} , \label{cvarc1}\\
u_{i_t} & \geq 0,~~ \forall   i_t  \in I_t|i_{t-1} , \label{cvarc1b}\\
\pi_c^-,\pi_c^+ &\geq 0,\label{cvarc1c} \\
\pi_c^+ - \pi_c^- + \sum_{i_t} \dfrac{u_{i_t} p{_{i_t|i_{t-1}}}  }{1-c} & \leq \gamma_0,\label{cvarc2}
\end{align}
where at optimality $\pi_c^+-\pi_c^-$ is VaR$_c$, which is split in two non-negative sub-variables for the need of standard linear programming solvers. 

Convex piecewise linear problems can be solved using standard linear programming software.

If $L_{i_{t-1}}$ is linear or convex piecewise linear, the constraints (\ref{cvarc1})-(\ref{cvarc2}) are linear or equivalent to a system of linear constraints.

Another risk measure that could be considered is the expected value of the downside risk, which limits the positive losses. In this case, the constraint (\ref{rc1}) becomes
\begin{align}
-L_{i_{t-1}} + u_{i_t} &\geq 0,~~  \forall   i_t  \in I_t|i_{t-1} , \label{cedr1}\\
u_{i_t} & \geq 0,~~ \forall   i_t  \in I_t|i_{t-1} , \label{cedr1b}\\
 \sum_{i_t} u_{i_t} p{_{i_t|i_{t-1}}} & \leq \gamma_0,\label{cedr2}
\end{align}
This algorithm limited the expected value of the positive loss to be smaller than $\gamma_0$. Setting both $\pi_c^+$ and  $\pi_c^-$ to 0 in (\ref{cvarc1})-(\ref{cvarc2})
implies that the CVaR optimization system becomes similar to limiting the downside risk.

The CVaR and the expected value of the downside risk penalize the value of the loss by controlling the violations using the expected value. This can be refined, in a similar way to the portfolio risk management (\cite{Markowitz}), by penalizing the large deviations. For example, it could be achieved by adding a norm constraint 
\begin{align}
\norm{u}_p \leq \gamma_1, \label{cnormp}
\end{align}
 where $\norm{.}_p$ is the norm in $\ell_p$.
 
Different norms can be used to limit the positive losses and they act differently depending on the discretion of the losses. For instance, the norm $\ell_\infty$ limits each component in a box, $\ell_1$ limits the total losses by a linear constraint. The norm $\ell_2$ introduces a quadratic constraint that penalized large deviations. A good alternative that will allow to use linear programming is to imitate the quadratic function using a piecewise convex linear function. For this, each $u_i$ is split into non-negative sub-variables $u_{i_t,j}$ $(j=1,...,J)$ and allocates slopes $f_{i_t,j}$  for each sub-variable, where $J$ is the number of pieces. In this case, the constraint involving the norm (\ref{cnormp}) becomes
\begin{align}
\sum_{ i_t  \in I_t|i_{t-1} }\sum_{j=1}^{J}f_{i_t,j}u_{i_t,j}\leq\gamma_2,\label{cnorm2}
\end{align}
with $u_{i_t}=\sum_{j=1}^{J}u_{i_t,j}.$

Another interesting alternative approach to control positive losses is through path wise constraint. This is done by introducing in the optimization system between $T-1$ and $T$ the following constraint 
 \begin{align}
z_{i_T}\leq \gamma_3 \label{eiT},
\end{align}
 and state equations
 \begin{align}
z_{i_t}=e^r z_{i_{t-1}}+L_{i_{t-1}} \label{eit},
\end{align}
for all $i_t$, where the state variable $z_{i_t}$ is the accumulated losses along a path up to $i_t$. Note that this approach can also be combined with the norm approach.

The special case where $\gamma_3=0$ in (\ref{eit}) leads to a risk-free strategy. Indeed, the issuer could encounter some positive losses, but should receive positive gains that will annihilate the strategy cost. This strategy shares the riskless property of the super-replicating strategy while being less restrictive allowing some positives losses along the contract life. 

% The CVaR is coherent dynamic risk measure  (see ). A dynamic risk measure is. The risk measure is used in the constraint not the objective function.
Since the VaR introduces complexity, we focus on the CVaR. Note that letting $c$ goes to 1 and $\gamma_0$ to 0 leads to the super-replicating strategy.

\begin{proposition} If the optimization problem is linear and $z_{i_{t-1}}$ is in the right hand side, then the cost-to-go function $V_{i_{t-1}}(z_{i_{t-1}})$ is convex piecewise linear.  \label{Propo3}
 \end{proposition}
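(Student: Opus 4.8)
The plan is to recognize Proposition \ref{Propo3} as the classical parametric--linear--programming (right-hand-side sensitivity) statement, combined with a short backward induction that guarantees the per-stage problem stays genuinely linear. First I would write the stage-$(t-1)$ problem of Algorithm \ref{algo2}, with the CVaR constraints (\ref{cvarc1})--(\ref{cvarc2}) (and, if present, any of the extra linear constraints (\ref{cnorm2}) or (\ref{eiT})--(\ref{eit})) substituted in, in the canonical form
\[
V_{i_{t-1}}(z_{i_{t-1}}) \;=\; \min_{\mathbf{v}}\;\bigl\{\, \mathbf{q}^{\top}\mathbf{v} \;:\; \mathbf{M}\,\mathbf{v} \;\geq\; \mathbf{h} + \mathbf{N}\, z_{i_{t-1}} \,\bigr\},
\]
where $\mathbf{v}$ collects $x_{i_{t-1}}$, $z_{i_t}$ and the auxiliary variables $u_{i_t},\pi_c^{\pm}$, and where, by hypothesis, the parameter $z_{i_{t-1}}$ enters only affinely and only on the right-hand side, so that $\mathbf{M}$ and $\mathbf{q}$ are independent of $z_{i_{t-1}}$.

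Next I would invoke LP duality: on the set of $z_{i_{t-1}}$ for which the primal is feasible and bounded,
\[
V_{i_{t-1}}(z_{i_{t-1}}) \;=\; \max_{\mathbf{y}\geq 0}\;\bigl\{\, (\mathbf{h} + \mathbf{N}\, z_{i_{t-1}})^{\top}\mathbf{y} \;:\; \mathbf{M}^{\top}\mathbf{y} = \mathbf{q} \,\bigr\}.
\]
The key point is that the dual feasible polyhedron $D = \{\mathbf{y}\geq 0 : \mathbf{M}^{\top}\mathbf{y} = \mathbf{q}\}$ does not depend on $z_{i_{t-1}}$. By the Minkowski--Weyl representation, $D$ has finitely many vertices $\mathbf{y}^{1},\dots,\mathbf{y}^{K}$ and finitely many extreme rays; whenever the dual supremum is finite it is attained at a vertex, so on its effective domain
\[
V_{i_{t-1}}(z_{i_{t-1}}) \;=\; \max_{1\leq k\leq K}\;(\mathbf{h} + \mathbf{N}\, z_{i_{t-1}})^{\top}\mathbf{y}^{k},
\]
a pointwise maximum of finitely many affine functions of $z_{i_{t-1}}$, hence convex and piecewise linear; its effective domain is the polyhedron cut out by the finitely many extreme-ray inequalities (equivalently, the set where the primal is feasible).

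Finally I would close the loop with a backward induction over $t$ to certify that the per-stage problem really is an LP with the incoming state on the right-hand side at every level. At $t=T$, $V_{i_T}(z_{i_T}) = P_{i_T}$ is constant, trivially convex piecewise linear. For the inductive step, if $V_{i_t}$ is convex piecewise linear, then the continuation value appearing in $G_{i_t}$ through $C_{i_t}$ can be represented by its epigraph, i.e. by finitely many linear inequalities in $(z_{i_t}, C_{i_t})$; since $L_{i_{t-1}}$ is convex piecewise linear by the earlier Proposition and $H_{i_t}$ is linear, the stage-$(t-1)$ problem is again an LP whose only dependence on $z_{i_{t-1}}$ is affine and confined to the right-hand side, so the argument above applies and $V_{i_{t-1}}$ is convex piecewise linear.

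I expect the main obstacle to be bookkeeping rather than depth: one must handle the effective domain and degenerate cases carefully, arguing that $V_{i_{t-1}}$ equals $+\infty$ exactly off a polyhedral set and is never $-\infty$ on it (this uses that $F$ is bounded below on the feasible set, which follows from the nonnegativity/no-arbitrage structure of the hedge portfolio), and checking that the epigraph reformulation used in the induction does not surreptitiously introduce $z_{i_{t-1}}$ into the constraint matrix $\mathbf{M}$. I would also note that convexity alone is cheaper: for any fixed feasible LP the map $(\text{right-hand side})\mapsto\min$ is convex by the usual infimal-projection argument (see \cite{Rockafellar70}), so that the only part genuinely requiring the vertex enumeration above is the piecewise-linear conclusion.
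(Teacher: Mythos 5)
Your proof is correct, and it takes a recognizably different route from the one in the paper. The paper disposes of the proposition in two sentences: convexity is obtained from the general fact that right-hand-side perturbation of a convex program yields a convex value function (citing Rockafellar), and piecewise linearity is obtained from the simplex feasibility criteria in the sense of Bereanu --- i.e.\ the primal optimal basis $B$ remains optimal on a polyhedral critical region of the parameter, on which $V(z_{i_{t-1}})=\mathbf{q}_B^{\top}B^{-1}(\mathbf{h}+\mathbf{N}z_{i_{t-1}})$ is affine, and finitely many such regions cover the domain. You instead pass to the dual, observe that the dual feasible polyhedron is parameter-free and pointed (because of $\mathbf{y}\geq 0$), and represent $V$ globally as a pointwise maximum of the finitely many affine functions indexed by dual vertices. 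The two arguments are classical duals of one another: the basis-region argument partitions the parameter space and exhibits affinity piece by piece, while yours gives a single closed-form max-of-affine representation, which makes convexity and piecewise linearity simultaneous corollaries and handles the effective domain (via the extreme-ray inequalities) more transparently. Your additional backward induction, showing that the epigraph reformulation of $V_{i_t}$ keeps the stage problem an LP with the parameter confined to the right-hand side, is not strictly required by the proposition as stated (linearity is a hypothesis there), but it is exactly the verification the paper defers to the later proposition on Algorithm \ref{algo5}, so including it does no harm and anticipates how the result is actually used.
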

 
  \begin{proof}
The perturbation of the right hand side convex optimization problem leads to convex function (see \cite{Rockafellar70}). The piecewise linearity results from the simplex algorithm feasibility criteria (see  \cite{Bereanu}).
 \end{proof}
 
To solve a linear program, the simplex algorithm makes many pivot steps.  A multi-parametric (mpLP) is an LP with parameters affecting its coefficients.  An algorithm for mpLP with parameters $z_{i_{t-1}}$ affecting the right hand side solves:
\begin{itemize}
\item First the dual LP relative to a given value of the parameters and generates by the optimal simplex feasibility conditions the parametric feasible polyhedral region which is  the domain of one piece of  $V( z_{i_{t-1}})$;  
\item	Then, in a similar way,  generates a neighbouring  feasible region by  a single dual pivoting step to update the inverse of neighbouring bases (Theorem 1 in \cite{Foote80});
\item	Keep doing the same until covering all the set of parameters.
\end{itemize}
Hence, if the number of pieces of the cost-to-go function is limited, the complexity of solving the mpLP is comparable to solve an LP, since it takes only some extra pivoting steps to solve mpLP from the initial LP. Consequently, by relying on mpLP algorithms we avoid hitting the curse of dimensionality even if we increase the dimension of the state variable $z_{i_{t-1}}$.

All this is fully true if the dual LP is not degenerate. Degeneracy is caused by redundant constraint(s), i.e. more than necessary constraints to define a vertex. In this case, multiple bases are associated to a vertex and the inverse base change by single pivot step does not cause the iteration to follow an edge of the dual polyhedral. 

In non-degenerate case, multi-parametric algorithms explore part of the graph of different bases (or vertices) by simple pivoting steps until the whole set of parameters is covered. The set of neighbouring polyhedral feasible parameters overlap only on the edges.  In the presence of degenerate vertex, a primal problem for mpLP is to select a subset of the bases that describe the vertex to avoid complex feasible parameters region overlapping.

\cite{Gal72} developed the first algorithm to solve mpLP. See \cite{Jonesa07} and \cite{Borrelli03} and the references there for more recent algorithms. Roughly speaking, mpLP algorithms differ by the way they handle degenerated vertexes.

%To solve a linear program, the simplex algorithm makes many pivot steps (the number increases very fast with the problem size). According to Theorem 1 in \cite{Foote} the generation of a neighbouring piece from a given piece is the result of one dual simplex pivot. Hence  the computational complexity of calculating the cost-to-go function $V_{i_{t-1}}(z_{i_{t-1}})$ is comparable to solve a linear program since the number of pieces is small. Consequently by relaying on the parametric linear programming and linear stochastic framework  we avoid hitting the curse of dimensionality even if we increase the number of state variables. Saeb???

\subsection{Risk Measure as Objective}

Similarly to portfolio selection problems, risk measures are used in the objective function to select hedging strategies.  Models based on stochastic programming and coherent risk measure are implemented. 

\subsubsection{Stochastic Programming Approach}

The first algorithm uses the concept of stochastic dynamic programming, where the future risk is averaged. In this approach, the objective function is a compromise between the current CVaR and the expected future CVaR. 
\begin{algo}\label{algo5}
For all $t=T, T-1,  \cdots, 1$ and all $i_{t-1}\in I_{t-1}$,  
\begin{align}
V_{i_{t-1}}(z_{i_{t-1}})= \min_{ x_{i_{t-1}},z_{i_{t}} }   e^{-r}\{ \textnormal{CVaR}_c( L_{i_{t-1}} (x_{i_{t-1}}, ~\xi^i_t,  G_{i_t}, z_{i_{t-1}})+\lambda E_{i_{t-1}}[V_{i_t}(z_{i_t})]\} \label{spa}
\end{align}
Under the constraint
\begin{align}
H_{i_t}( z_{i_{t-1}},z_{i_t},x_{i_{t-1}},\xi^i_t , G_{i_t} )=0,
\end{align}
where $V_{i_T}(z_{i_T})=0$ for $i_T$.
\end{algo}
The objective function takes into account the sum of the local risk measure and  weighted average of the near future risk measure.

%Note that again at least one of the state variables has to be the initial capital needed for financing the investment strategy. 

In this algorithm, the state variable $z_{i_t}$ is the continuation value (i.e. $C_{i_{t}}(z_{i_{t}})=z_{i_t}$) since it is the amount required to continue the issuer's operations beyond $i_t$. 
 
In the case we consider the hedging strategy involving three assets according to \cite{Rockafellar} the Algorithm \ref{algo5} becomes
\begin{align}
V_{i_{t-1}}(z_{i_{t-1}})=e^{-r} \min_{x_{i_t},\pi_c^+,\pi_c^-, u_{i_{t}},\theta^+_{i_{t}},\theta^-_{i_{t}}  }   \pi_c^+ - \pi_c^- + \sum_{ i_t  \in I_t|i_{t-1}} \dfrac{u_{i_t} p{_{i_t|i_{t-1}}}  }{1-c} +\lambda \sum_{ i_t  \in I_t|i_{t-1}} (\theta^+_{i_t}-\theta^-_{i_t}) p{_{i_t|i_{t-1}}}  , \label{spab}
\end{align}piecewise
under the constraints
\begin{align}
-L_{i_{t-1}} (x_{i_{t-1}}, ~\xi^i_t,  G_{i_t}, z_{i_{t-1}}) + \pi_c^+ - \pi_c^- + u_{i_t} &\geq 0,~~  \forall   i_t  \in I_t|i_{t-1} ,\nonumber \\
u_{i_t} & \geq 0,~~ \forall   i_t  \in I_t|i_{t-1} ,\nonumber \\
\theta^+_{i_t}-\theta^-_{i_t}-V_{i_t}(z_{i_t})& \geq 0,~~ \forall   i_t  \in I_t|i_{t-1}, \label{contthetaV}\\
a_{i_{t-1}}+b_{i_{t-1}}+c_{i_{t-1}}&=z_{i_{t-1}},\nonumber
\end{align}
for $t=1,...,T-1$ and where $L_{i_{t-1}}$ is given by (\ref{wtm1}) and (\ref{ltm1}). At the optimal,  the value $\theta^+_{i_t}-\theta^-_{i_t}$ is the next period objective function, which is split in 2 non-negative sub-variables since it could be negative. For $t=T$, we need to remove the constraint (\ref{contthetaV}) and the last term of (\ref{spab}).
%Steps similar to (\ref{rmo1}) and (\ref{cvarc12})-(\ref{cvarc13}) need to be performed for the implementation.
%For this algorithm, at least one additional constraint on positive losses is required (\ref{cnormp})  and/or (\ref{eiT}). Otherwise, our numerical results showed that the objective function CVaR can take unacceptable high values.
%Note that setting $\lambda=0$ in Algorithms (\ref{algo4}) and (\ref{algo5}) leads to Algorithm (\ref{algo3}).

If $V_{i_t}$ is a convex piecewise linear function defined by its supporting hyperplane, the constraint $\theta^+_{i_t}-\theta^-_{i_t}-V_{i_t}(z_{i_t}) \geq 0$ is equivalent to several linear constraints where each of $V_{i_t}(z_{i_t})$ supporting hyperplane is smaller or equal to $\theta^+_{i_t}-\theta^-_{i_t}$. 

Since the function $G_{i_t}$ is linear, the above optimization problem is a parametric linear program  with parameter $z_{i_t}$ in the right hand side.  

\begin{proposition} 
For all $t=T, T-1,  \cdots, 1$ and all $i_{t-1}\in I_{t-1}$, the function $V_{i_{t-1}}(z_{i_{t-1}})$ is a convex piecewise linear function.  
\end{proposition}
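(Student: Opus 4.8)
The plan is a backward induction on $t$, with Proposition~\ref{Propo3} doing the work at each step. The base case is the terminal condition $V_{i_T}(z_{i_T})=0$, a constant, hence trivially convex and piecewise linear. For the inductive step, suppose that for some $t\in\{1,\dots,T\}$ the function $V_{i_t}(z_{i_t})$ is convex piecewise linear for every $i_t\in I_t$, and consider the problem (\ref{spab})--(\ref{contthetaV}) that defines $V_{i_{t-1}}$.

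First I would check that, under the inductive hypothesis, this problem is a linear program whose only dependence on $z_{i_{t-1}}$ is through the right-hand side. The objective in (\ref{spab}) is affine in the decision variables $x_{i_{t-1}}$, $\pi_c^\pm$, $u_{i_t}$ and $\theta^\pm_{i_t}$. Since $W_{i_{t-1}}$ is the linear map (\ref{wtm1}) and $G_{i_t}$ is linear, the loss-function proposition of Section~\ref{hplf} gives that $L_{i_{t-1}}$ is linear in $x_{i_{t-1}}$, so the CVaR-linearization inequalities $-L_{i_{t-1}}+\pi_c^+-\pi_c^-+u_{i_t}\ge 0$ together with $u_{i_t}\ge 0$ are linear; the state equation $H_{i_t}$ is linear by assumption. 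Because $V_{i_t}$ is a maximum of finitely many affine functions, its epigraph is a polyhedron, so the coupling constraint (\ref{contthetaV}), $\theta^+_{i_t}-\theta^-_{i_t}-V_{i_t}(z_{i_t})\ge 0$, is equivalent to a finite family of linear inequalities in $(\theta^\pm_{i_t},z_{i_t})$, exactly as noted just before the statement. Finally, $z_{i_{t-1}}$ appears only in the budget equation $a_{i_{t-1}}+b_{i_{t-1}}+c_{i_{t-1}}=z_{i_{t-1}}$, i.e. in the right-hand side. Hence the bracketed minimization in (\ref{spa}) is the optimal-value function of an LP parametrized by $z_{i_{t-1}}$ on the right-hand side, so Proposition~\ref{Propo3} applies and yields a convex piecewise linear function; multiplying by the constant $e^{-r}>0$ preserves both properties, so $V_{i_{t-1}}$ is convex piecewise linear. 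For the first step $t=T$ one drops (\ref{contthetaV}) and the last term of (\ref{spab}), which only shortens the argument.

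The point that needs care is the role of the two ``$z$'' variables: $z_{i_t}$ is an internal decision variable that is minimized out, and it is precisely the joint convexity of the reformulated objective and constraints --- which requires the inductive convexity of $V_{i_t}$, so that its epigraph constraint is linear --- that lets partial minimization over $z_{i_t}$, $x_{i_{t-1}}$, $\pi_c^\pm$, $u_{i_t}$, $\theta^\pm_{i_t}$ preserve convexity, while $z_{i_{t-1}}$ is the lone right-hand-side parameter. I would also remark that finiteness of the number of linear pieces is not implied by convexity alone but follows here because each stage is a genuine finite LP, whose value function has finitely many linearity regions, and $I_{t-1}$ is finite, so the recursion never creates infinitely many pieces. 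As in Proposition~\ref{Propo3}, the statement is to be read on the effective domain of $V_{i_{t-1}}$, i.e. the set of $z_{i_{t-1}}$ for which the problem is feasible and bounded. I expect the two-$z$ bookkeeping to be the only real content; everything else is a direct appeal to results already established in Section~\ref{HS}.
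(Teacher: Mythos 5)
Your proposal is correct and follows essentially the same route as the paper: a backward induction in which, at each stage, the inductive hypothesis lets you replace $V_{i_t}$ in the epigraph constraint (\ref{contthetaV}) by finitely many supporting hyperplanes, so the stage problem becomes a linear program with $z_{i_{t-1}}$ appearing only in the right-hand side, and Proposition~\ref{Propo3} then delivers convex piecewise linearity. Your added care about the distinct roles of $z_{i_t}$ (an internal decision variable minimized out) versus $z_{i_{t-1}}$ (the right-hand-side parameter), and about finiteness of the pieces, only makes explicit what the paper's shorter proof leaves implicit.
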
 

 \begin{proof}
For all $i_{T-1}$, the linear problem involves the state variable $z_{i_{T-1}}$ in the right hand-side. Hence, the optimization problem is a linear parametric ($z_{i_{T-1}}$)  program and $V_{i_{T-1}}(z_{i_{T-1}}) $ is a convex piecewise linear function (see Proposition \ref{Propo3}). For all tree nodes of  ${T-2}$, the cost-to-go function $V_{i_{T-1}}(z_{i_{T-1}})$ in the constraint (\ref{contthetaV}) can be replaced by its supporting hyperplane as mentioned above. Hence, the resulting optimization problem is again a linear parametric program with the parameter affecting the right hand side and the cost-to-go function is convex piecewise linear. By proceeding backward, we can prove that the cost-to-go function is convex piecewise linear using similar arguments.
 \end{proof}
 
We refer to the discussion following the Proposition \ref{Propo3} for an efficient algorithm to build this cost-to-go function.

 %que sur la complete  construction de la fonction linear pas morceau pivotage simplex temps de calcul est comparable a programme linaire Ainsi on peut traiter complexitŽ programmation dynamique n'explose pas avec quelques variables d'etats et arbre combine

%It is important to point out that using the risk measure as objective solves the problem for different hedging portfolios. Hence, the issuer has the liberty of choosing a specific dynamic strategy among the admissible solutions. An interesting choice is the the well-known self-financing strategy where the issuer will not admit any loss until the contract reach its maturity. Hence, the issuer rebalances the hedging portfolio using the accumulated value less the benefit paid at this time, that is
%\[
%W_{i_{t-1}} ( x_{i_{t-1}}, \xi^i_t , z_{i_{t-1}}) -P_{i_t} = C_{i_t},
%\]     
%for $t=1,...,T-1$.  

\subsection{Barrier on Future Risk}

Even though Algorithm \ref{algo5} minimizes a weighed average of risk measures, we noticed during backtesting that the risk measure could take for some nodes unacceptable high values. In the following algorithm we try to remedy this drawback by constraining the future CVaR.
\begin{algo}\label{algo3}
For all $t=T, T-1,  \cdots, 1$ and all $i_{t-1}\in I_{t-1}$,  
\begin{align}
V_{i_{t-1}}(z_{i_{t-1}})= \min_{ x_{i_{t-1}},z_{i_{t}} }  \textnormal{CVaR}_c( L_{i_{t-1}} (x_{i_{t-1}}, ~\xi^i_t,  G_{i_t}, z_{i_{t-1}} ))\label{rmo}
\end{align}
Under the constraints
\begin{align}
H_{i_t}( z_{i_{t-1}}, z_{i_t},x_{i_{t-1}},\xi^i_t, G_{i_t}  )&= 0, \label{rmc1}\\
V_{i_{t}}(z_{i_{t}})&\leq \gamma_0,~~ \forall   i_t  \in I_t|i_{t-1} , \label{rmc2}
\end{align}
where $V_{i_T}(z_{i_T})=0$ for all $i_T$.
\end{algo}
The parameter $\gamma_0$ controls the future risk allowed by the issuer. There are alternative ways to constraint future risk measure including, for example, the expected value. We prefer constraint (\ref{rmc2}) since it better controls the path wise risk.% Note that the constraint (\ref{rmc2}) does not exist for $t=T$.

Here again, at least one of the state variables has to be the initial capital needed for financing the investment strategy.  If we let the first components of the state vector be this capital then the first state equation has to be $z_{i_{t-1}} =F_{i_{t-1}} (x_{i_{t-1}} )$. 

There is a fundamental difference between Algorithm \ref{algo3} and standard stochastic dynamic programs since the cost-to-go function is not anymore modelled in the objective function, but used instead in the constraint.

%The algorithm feasibility between $t-1$ and $t$ is conditional on the issuer decision on the continuation value. Therefore the continuation values are determined based on the level of risk that the issuer is ready to accept in the following period. This risk level threshold is set to $\gamma_0$. In term of constraint, we introduce (\ref{rmc2}), where the state variable $z_{i_{t}}$ is the continuation value.

In the case of the hedging portfolio consists of three assets, according to \cite{Rockafellar}, the objective function of Algorithm \ref{algo3} becomes
 \begin{align}
V_{i_{t-1}}(z_{i_{t-1}})= \min_{x_{i_t},\pi_c^+,\pi_c^-, u_{i_{t}} ,z_{i_t }}   \pi_c^+ - \pi_c^- + \sum_{i_t} \dfrac{u_{i_t} p{_{i_t|i_{t-1}}}  }{1-c},\label{rmob}\end{align}
with following constraints
\begin{align}
-G_{i_t} (P_{i_t} , z_{i_t})  + a_{i_{t-1}}\frac{S_{i_t}}{S_{i_{t-1}}}+b_{i_{t-1}}e^{r}+c_{i_{t-1}}\frac{O_{i_t}(T)}{O_{i_{t-1}}(T)}+ \pi_c^+ - \pi_c^- + u_{i_t} &\geq 0,~~  \forall   i_t  \in I_t|i_{t-1} ,\nonumber \\
u_{i_t} & \geq 0,~~ \forall   i_t  \in I_t|i_{t-1} ,\nonumber \\
\pi_c^-,\pi_c^+ &\geq 0,\nonumber\\
a_{i_{t-1}}+b_{i_{t-1}}+c_{i_{t-1}}&=z_{i_{t-1}},\nonumber\\
 V_{i_{t}}(z_{i_{t}}) &\leq \gamma_0 ,~~ \forall   i_t  \in I_t|i_{t-1}.\nonumber
\end{align}
Similarly to the optimization problem with the objective function (\ref{spab}), $V_{i_t}$ is replaced by its supporting hyperplane and the previous optimization problem is also a right hand side parametric linear program with parameter $z_{i_t}$. Hence, this algorithm shares the same properties as well as complexity and use the same optimization technics as Algorithm \ref{algo5}.  

\subsubsection{Coherent Dynamic Risk Measure}
%Dans ce modele la continuite  zit 
%The previous algorithm could benefit from linking the different periods together to be more consistent and meaningful. 

The following algorithm uses the concept of coherent dynamic risk measures to link the different periods. \cite{Riedel04} shows that coherent dynamic risk measures satisfy a discounted recursion, that is
\begin{algo}\label{algo4}
For all $t=T, T-1,  \cdots, 1$ and all $i_{t-1}\in I_{t-1}$,  
\begin{align}
V_{i_{t-1}}(z_{i_{t-1}})= \min_{ x_{i_{t-1}},z_{i_{t}} }   e^{-r} \textnormal{CVaR}_c( L_{i_{t-1}} (x_{i_{t-1}}, ~\xi^i_t,  G_{i_t}, z_{i_{t-1}} )+V_{i_t}(z_{i_t}))\label{cdo}
\end{align}
Under the constraint
\begin{align}
H_{i_t}( z_{i_{t-1}},z_{i_t},x_{i_{t-1}},\xi^i_t , G_{i_t} )=0 \label{cdr}
\end{align}
where $V_{i_T}(z_{i_T})=0$ for all $i_T$.
\end{algo}%where $\lambda$ is the proportion of the future CVaR that the objective function considers. 
In the objective function the risk measure takes into account the loss random functions as well as the subsequent near future risk measure. This allows to keep some path wise risk control.

Again at least one of the state variables has to be the initial capital needed for financing the investment strategy. In this case the state variable $z_{i_t}$ is the continuation value $C_{i_{t}}(z_{i_{t}})=z_{i_t}$. 
 
 In the case we consider the hedging strategy involving three assets according to \cite{Rockafellar} the Algorithm \ref{algo4} becomes
\begin{align}
V_{i_{t-1}}(z_{i_{t-1}})=e^{-r} \min_{x_{i_t},\pi_c^+,\pi_c^-, u_{i_{t}},\theta^+_{i_{t}},\theta^-_{i_{t}}  }   \pi_c^+ - \pi_c^- + \sum_{i_t} \dfrac{u_{i_t} p{_{i_t|i_{t-1}}}  }{1-c} ,\label{cdob}
\end{align}
under the constraints
\begin{align}
-L_{i_{t-1}} (x_{i_{t-1}}, ~\xi^i_t,  G_{i_t}, z_{i_{t-1}}) + \pi_c^+ - \pi_c^- + u_{i_t} -\theta^+_{i_t}+\theta^-_{i_t}&\geq 0,~~  \forall   i_t  \in I_t|i_{t-1} ,\label{cdob1}\\
u_{i_t} & \geq 0,~~ \forall   i_t  \in I_t|i_{t-1} ,\label{cdob2} \\
\theta^+_{i_t}-\theta^-_{i_t}-V_{i_t}(z_{i_t})& \geq 0,~~ \forall   i_t  \in I_t|i_{t-1} \label{cdob3}\\
a_{i_{t-1}}+b_{i_{t-1}}+c_{i_{t-1}}&=z_{i_{t-1}},\label{cdob4}
\end{align}
where $L_{i_{t-1}}$ is given by (\ref{wtm1}) and (\ref{ltm1}). $\theta^+$ and $\theta^-$ are a split of a variable that could be positive or negative for the need of the linear programming solvers. The cost-to-go function $V_{i_t}$ is modelled through its epigraphic definition. If $V_{i_t}$ is a convex piecewise linear function defined by its supporting hyperplane, the constraint $\theta^+_{i_t}-\theta^-_{i_t}-V_{i_t}(z_{i_t}) \geq 0$ is equivalent to several linear constraints where each of $V_{i_t}(z_{i_t})$ supporting hyperplane is smaller or equal to $\theta^+_{i_t}-\theta^-_{i_t}$. 

Similarly to Algorithms \ref{algo5} and \ref{algo3}, the sate variable $z_{i_t}$ is the amount required to continue the issuer's operations beyond $i_t$. Properties and complexity are also similar to the previous optimization problems and analogous technics can be used.% In this case, the continuation value $C_{i_t}$ needs to be used as the state variable since the optimization problem minimizes the risk measure at node $i_t$.

\begin{proposition} 
The previous optimization problem is equivalent to Algorithm \ref{algo4}.  
\end{proposition}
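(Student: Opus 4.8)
The plan is to obtain the linear program (\ref{cdob})--(\ref{cdob4}) from Algorithm \ref{algo4} by two standard reformulations applied to the single-step subproblem (\ref{cdo})--(\ref{cdr}): the Rockafellar--Uryasev variational representation of $\textnormal{CVaR}_c$, and an epigraphic lifting of the cost-to-go term. First I would recall from \cite{Rockafellar} that for a discrete random variable taking the value $Z_{i_t}$ on the transition $i_t|i_{t-1}$ with probability $p_{i_t|i_{t-1}}$,
\[
\textnormal{CVaR}_c(Z)=\min_{\eta\in\real}\Big\{\eta+\tfrac{1}{1-c}\sum_{i_t}p_{i_t|i_{t-1}}(Z_{i_t}-\eta)^+\Big\},
\]
and that the positive parts can be linearized by slacks $u_{i_t}\ge0$ with $u_{i_t}\ge Z_{i_t}-\eta$. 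Applying this with $Z_{i_t}=L_{i_{t-1}}(x_{i_{t-1}},\xi^i_t,G_{i_t},z_{i_{t-1}})+V_{i_t}(z_{i_t})$, writing the free optimizer $\eta$ as $\pi_c^+-\pi_c^-$ with $\pi_c^\pm\ge0$ (a harmless splitting, since the problem only sees $\pi_c^+-\pi_c^-$), and merging the inner $\min$ over $(\eta,u)$ with the outer $\min$ over $(x_{i_{t-1}},z_{i_t})$, turns (\ref{cdo}) into the minimization of (\ref{cdob}) subject to $u_{i_t}\ge0$, the state equation (\ref{cdr}), and $-L_{i_{t-1}}+\pi_c^+-\pi_c^-+u_{i_t}-V_{i_t}(z_{i_t})\ge0$ for all $i_t$.

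Second, I would introduce for each $i_t$ a scalar $\theta_{i_t}=\theta^+_{i_t}-\theta^-_{i_t}$ ($\theta^\pm_{i_t}\ge0$, again a harmless splitting of a free variable), add the constraint $\theta_{i_t}\ge V_{i_t}(z_{i_t})$, and substitute $\theta_{i_t}$ for $V_{i_t}(z_{i_t})$ in the slack inequality; this yields precisely (\ref{cdob1})--(\ref{cdob4}). The point is that $\theta_{i_t}$ is absent from the objective (\ref{cdob}) and appears only in the pair of inequalities $\theta_{i_t}\ge V_{i_t}(z_{i_t})$ and $\theta_{i_t}\le -L_{i_{t-1}}+\pi_c^+-\pi_c^-+u_{i_t}$, so it may be eliminated by choosing any value in the interval they delimit; the resulting feasible set of the remaining variables is exactly $\{\,V_{i_t}(z_{i_t})\le -L_{i_{t-1}}+\pi_c^+-\pi_c^-+u_{i_t}\,\}$, i.e. the constraint produced in the first step. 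Moreover, $V_{i_t}$ is a convex piecewise linear function of $z_{i_t}$: this is proved by backward induction on $t$, starting from $V_{i_T}\equiv0$ and using Proposition \ref{Propo3} (and the analogous results already obtained for Algorithms \ref{algo5} and \ref{algo3}). Consequently $\theta_{i_t}\ge V_{i_t}(z_{i_t})$ is equivalent to the finite system of linear inequalities requiring $\theta_{i_t}$ to dominate each supporting hyperplane of $V_{i_t}$ — the epigraphic description used in the text — and this representation is exact, not an outer relaxation, because the family of hyperplanes is finite. Finally, I would note that the positive constant $e^{-r}$ factors through the $\min$, and that the extra equalities $a_{i_{t-1}}+b_{i_{t-1}}+c_{i_{t-1}}=z_{i_{t-1}}$ appearing in (\ref{cdob4}) are simply the instance of $H_{i_t}(\cdots)=0$ that, for the three-asset portfolio, identifies the state $z_{i_{t-1}}$ with the invested capital $F_{i_{t-1}}(x_{i_{t-1}})$, so no constraint is created or destroyed.

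Combining the two reformulations shows that, for every $i_{t-1}$ and every value of the parameter $z_{i_{t-1}}$, the optimal value of (\ref{cdob})--(\ref{cdob4}) equals that of (\ref{cdo})--(\ref{cdr}), and both produce the same cost-to-go function, so the formulations are equivalent. I expect the delicate point to be the elimination of $\theta_{i_t}$: one must verify that it enters monotonically in the only nontrivial constraint it touches, so that projecting it out reproduces exactly the CVaR slack constraint, and one must have the convex piecewise linearity of $V_{i_t}$ available at stage $t$ — this is what makes the epigraphic constraint a genuine finite linear reformulation rather than a relaxation. Both facts follow from the backward induction together with the convexity propositions established above, so the argument closes.
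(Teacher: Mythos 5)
Your argument is correct, but it reaches the conclusion by a genuinely different route than the paper. The paper argues at the level of an optimal solution: it assumes a binding constraint of (\ref{cdob1}) with the corresponding (\ref{cdob3}) slack, introduces the Lagrange multiplier $\lambda^*_{i_t}<0$ of (\ref{cdob1}), and derives a contradiction by perturbing $\theta^{*+}_{i_t}$ or $\theta^{*-}_{i_t}$ — i.e.\ a complementary-slackness argument showing that the epigraph variable is driven down to $V_{i_t}(z^*_{i_t})$ wherever the CVaR constraint bites. You instead project $\theta_{i_t}=\theta^+_{i_t}-\theta^-_{i_t}$ out by Fourier--Motzkin elimination: since it is absent from the objective (\ref{cdob}) and is sandwiched only between the lower bound $V_{i_t}(z_{i_t})$ from (\ref{cdob3}) and the upper bound $-L_{i_{t-1}}+\pi^+_c-\pi^-_c+u_{i_t}$ from (\ref{cdob1}), eliminating it yields exactly the Rockafellar--Uryasev constraint $u_{i_t}\geq L_{i_{t-1}}+V_{i_t}(z_{i_t})-(\pi^+_c-\pi^-_c)$, and hence the LP collapses to the standard linearization of $\textnormal{CVaR}_c(L_{i_{t-1}}+V_{i_t}(z_{i_t}))$ in (\ref{cdo}). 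Your route buys several things: it needs no sign claim on the multiplier (the paper asserts $\lambda^*_{i_t}<0$ without justification), it does not have to case-split on which constraints are binding or on which of $\theta^+_{i_t},\theta^-_{i_t}$ is positive, and it establishes equivalence of the full feasible sets and optimal values rather than only a property of one optimal solution; you also make explicit the two ingredients the paper leaves implicit, namely that the supporting-hyperplane description of (\ref{cdob3}) is exact because $V_{i_t}$ is convex piecewise linear (by backward induction via the earlier propositions) and that the elimination works because $\theta_{i_t}$ enters the two constraints with opposite monotonicity. The paper's proof, in exchange, conveys the duality intuition for why the relaxation $\theta_{i_t}\geq V_{i_t}(z_{i_t})$ is tight at optimality, but as written it is the less complete of the two arguments.
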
 

 \begin{proof}
What we really have to prove is that for every binding constraint of ($\ref{cdob1}$) the associated constraint of ($\ref{cdob3}$) is also binding (i.e. $\theta^{*+}_{i_t}-\theta^{-*}_{i_t}=V_{i_t}(z^*_{i_t})$, where  $\theta^{*+}_{i_t}, \theta^{*-}_{i_t}$, and $z^*_{i_t}$ are the optimal values of $\theta^{+}_{i_t}, \theta^{-}_{i_t}$, and $z_{i_t}$, respectively). We will proceed by contradiction and assume $\theta^{*+}_{i_t}-\theta^{-*}_{i_t}>V_{i_t}(z^*_{i_t})$.  
Let $\lambda_{i_{t}}^* <0$ the Langrange multiplier relative to $i_{t}$ in the constraint ($\ref{cdob1}$). Since we are dealing with convex optimization, the optimization problem (\ref{cdob})-(\ref{cdob4}) is equivalent to 
\begin{align}
 \min_{x_{i_t},\pi_c^+,\pi_c^-, u_{i_{t}},\theta^+_{i_{t}},\theta^-_{i_{t}}  }&&   \pi_c^+ - \pi_c^- + \sum_{i_t} \dfrac{u_{i_t} p{_{i_t|i_{t-1}}}  }{1-c}+\lambda_{i_{t}}^*(-L_{i_{t-1}} (x_{i_{t-1}}, ~\xi^i_t,  G_{i_t}, z_{i_{t-1}}) \nonumber \\
 &&+ \pi_c^+ - \pi_c^- + u_{i_t} -\theta^+_{i_t}+\theta^-_{i_t})\label{cdobp1}
\end{align}
with the same constraints ($\ref{cdob1}$)-($\ref{cdob4}$) except for the specific $i_t$ in ($\ref{cdob1}$). We know from linear optimization that either $\theta^+_{i_{t}}$ or $\theta^-_{i_{t}}$ is active i.e. that at optimality  only one of the two is positive. If $\theta_{i_t}^{*+}$ (rest. $\theta_{i_t}^{*-}$) is strictly positive, the objective function (\ref{cdobp1}) decreases when $\theta_{i_t}^{*+}$ decreases (resp. increases), since $\lambda_{i_{t}}^* <0$. The decrease is possible since by assumption $\theta^{*+}_{i_t}$(resp. $-\theta^{*-}_{i_t}$)$>V_{i_t}(z^*_{i_t})$. In both cases we have a contradiction. 
 \end{proof}

% objective function (\ref{cdo}) becomes (\ref{rmo1}) and in addition to the constraint (\ref{cdr}), the following constraints need to be added
                                    
\section{Numerical Examples}
\label{sec:numex}

Our numerical examples consist of Guaranteed Investment Certificates (GIC) and Equity-Index Annuity (EIA), where the index is governed by a Binomial recombining tree. It is assumed that the period is fixed to one month and each period is divided into $N$ subperiods. Hence, the index makes $N$ movements per periods and the issuers observe and re-balance their hedging portfolios each month according to the risk-control strategies. Hence, the total number of periods $T$ of the planning horizon is expressed in terms of months. 

For simplification purposes, the index will be governed by the \cite{crr79} model where $S_{0_0}=1$. In this recombining tree, the index $S_{i_t}$ has two possible outcomes in each subperiod: it either increases to $uS_{i_t}$ or decreases to $dS_{i_t}$. The increasing and decreasing factors $u$ and $d$ are assumed such that the index process converges to the lognormal distribution underlying a continuous geometric Brownian motion with annual drift $\mu$ and volatility $\sigma$. In other words, $u=e^{\sigma (12N)^{-0.5}}$ and $d=u^{-1}$, where the volatility is assumed to be equal to 20\%. It also requires that the probability of an up movement is defined by
\begin{equation}
\Pr[S_{i_{t+n/N}}=uS_{i_{t+(n-1)/N}}|S_{i_{t+(n-1)/N}}] =\frac{e^{\mu/12}-d}{u-d}=\pi, \label{pi}
\end{equation}
for $t=0, 1, ...$ and $n=1,....,N$.% The special case where $n=1$ gives the value of the stock at the end of the first subperiod and when $n=N$ is the value of random process $\xi$ at the next period. 

At period $t$, the index process is distributed according to a binomial distribution given $S_{i_{t-1}}$. This conditional process has $N+1$ possible outcomes at $t$ given by $S_{i_{t-1}}u^{j}d^{N-j}$, $j=0,...,N$ with corresponding conditional probabilities
\begin{eqnarray}
\Pr[S_{i_{t}}=S_{i_{t-1}}u^{j}d^{N-j}|S_{i_{t-1}}]= \left(
\begin{array}{c}
  N \\
  j \\
\end{array}%
\right) \pi ^{j} [1-\pi]^{N-j}, ~j=0,1,..., N.
\label{bin1}
\end{eqnarray}

Hence the algorithms face a stochastic process based on a recombining tree with the cardinality of $I_t|i_{t-1}$ equal to $N+1$. The nodes in the tree increase linearly by $N$ in each period. In the case of GICs the stochastic process $\xi$ is nothing but $S$. In the case of EIAs, the stochastic process has two dimensions: $S$ and the policyholder cohort.

For the sake of simplicity, we assume the usual frictionless financial market: no tax, no transaction costs, etc.  

\subsection{Hedging Error Mismatches}

The efficiency of the different strategies is assessed using the discounted value of the path wise error incurred by the strategy. The error is given by the temporary loss function. Hence, the present value hedging error is defined by
\begin{align}
M=\sum_{t=1}^{T} e^{-r t/12} L _{i_{t-1}} \label{m1},
\end{align}
where $r$ is the annual rate.

For comparative purposes, we add the initial value of the hedging portfolio $F_{i_0}(x_{i_0})$ and CVaR$_{95\%}$ of the discounted mismatch $M$ and subtract the initial investment from the holder, that is $CR=F_{i_0}(x_{i_0})+$CVaR$_{95\%}(M)-1$, since the initial investment is assumed to be 1. In other words, the issuer holding the amount $CR$ at the beginning possesses the expected value of the worst 5\% losses. 

This measure $CR$ is used as a guideline to fine tune the risk measure parameters (e.g. $c$, $\gamma_0$, etc.) in all algorithms. Different sets of these parameters are tested and the ``optimal" values are compared to asses the quality of each hedging strategy.

\subsection{Guaranteed Investment Certificates}

Guaranteed Investment Certificates are secure investments issued by banks that offer a wide range of options. They are secure since the investor's principal is guaranteed by issuers. GIC can be split in two main categories: cashable and non-cashable. The non-cashable GICs can provide either a fixed return or a return linked with the financial market. In this case the GIC can also guarantee a minimum return. The issuers reduce the cost of such contracts by imposing an upper bound on the market return called a cap. The GIC payoff is given by 
\begin{align}
P_{i_T}=\max \left[\min \left[ 1+R_{i_T},(1+\zeta)^{T/12}\right],(1+g)^{T/12} \right],\label{gic}
\end{align}
where $\zeta$ is the annual cap rate, $g$ minimum annual guaranteed rate, and the return $R$ is defined by
\begin{equation}
  R_{i_T}=\frac{S_{i_T}}{S_0}-1. \label{RT}
\end{equation}%=\prod_{t=1}^{T} \xi^i_t -1
The minimum in (\ref{gic}) imposes a barrier on the possible returns and the maximum guarantees a minimum return on the investment,

For GICs, the continuation value $G_{i_t}(P_{i_t},C_{i_t})=C_{i_t}$, for $t=1,....,T-1$ since we do not have any intermediate payments. In addition, the random process $\xi^i_t$ is simply $S_{i_t}$ for $t=0,...,T$.

In absence of indication to the contrary, we assume the following for our analysis: 1-year maturity contracts ($T=12$), $g=0\%$, $\zeta=6\%$, $r=3\%$, $\mu=8\%$, $\sigma=20\%$, and $N=6$.

In addition to a risk-free asset and the underlying stock, it is assumed that the issuer can invest, every month, in a 1-month European call option at the money. Without loss of generality, we assume that at time $t$ the issuer can invest in a European call option, which is at the money. The payoff of this option is given by $(S_{i_{t+1}}-S_{i_t})^+$, and the price $O_{i_t}(t+1)$ is obtained using \cite{BS73}. In this paper, we omit to consider financial options with longer maturities to avoid more complex modelling and algorithms. We implemented maturities that cover several optimization periods and observed similar conclusions about the impact of adding the options to the set of assets.

%Let  denote the time-$t$ price of the financial option with maturity $t+1$.

For each hedging strategy, the optimal replicating portfolios are obtained using the backward approach presented in Section \ref{HS}. Once the optimal hedging portfolios are known, the distribution of hedging errors is estimated. Since errors calculations imply decisions at each node, we have to deal with the associated unfolded tree. In this case, the number of nodes increases exponentially with the number of periods. For example, the unfolded tree has $(N+1)^{12}$ different final nodes at the last period. Consequently we proceed by sampling different index path scenarios. We use 50,000 simulations to approximate the distribution of discounted errors $M$ defined in (\ref{m1}), which allows estimating $CR$.

\subsubsection{Risk Measure as Constraint}

We first investigate the stability of our algorithm. We obtained the initial value of the hedging portfolios using Algorithm \ref{algo2} for different periods ($T=2, 4, 6, 8, 12, 24$) and nodes ($N=2, 4, 6, 8, 12, 24$). For  comparison purposes, the risk measure threshold $c$ is set to 60\% and parameter $\gamma_0$ is set to 0. The cost is increasing when both parameters are increasing.  Increasing the number of nodes increases the risk by enlarging the range of the index distribution. Increasing the number of period increases the number of times that the issuer is monitoring his portfolio, which is costlier. The algorithm is fairly stable when both the number of periods and nodes are larger than 6. Hence, increasing the number of periods and nodes above 6 has a minimal impact on the initial value.
 
 \begin{center}
\begin{table}[htbp]
  \centering
  \begin{tabular}{c||c  c  c  c c c   }%\cline{2-13} 
  \hline
 $T$ $\backslash$  $N$ &2&4&6&8&12&24\\
    \hline

2	&	0.9948	&	1.0045	&	1.0081	&	1.0108	&	1.0124	&	1.0151	\\
4	&	1.0023	&	1.0109	&	1.0113	&	1.0128	&	1.0139	&	1.0122	\\
6	&	1.0063	&	1.0135	&	1.0127	&	1.0134	&	1.0115	&	1.0126	\\
8	&	1.0089	&	1.0150	&	1.0132	&	1.0113	&	1.0111	&	1.0134	\\
12	&	1.0122	&	1.0164	&	1.0108	&	1.0103	&	1.0116	&	1.0127	\\
24	&	1.0165	&	1.0125	&	1.0112	&	1.0114	&	1.0113	&	1.0127	\\ 
\hline

\end{tabular}%
    \caption{Initial portfolio values $F_{i_0}$ for different numbers of periods $T$ and nodes $N$.}
  \label{tabstab}%
\end{table}%\end{center}
\end{center}

We now set the risk measure retention level $c$ and the threshold $\gamma_0$ for the CVaR as a constraint (Algorithm \ref{algo2}). In Figure \ref{rgic1_c}, the $CR$ are evaluated for different risk measure levels ($c=5\%,6\%,\cdots, 95\%$) and different $\gamma_0$ (0, 1\%, 2\%) based on Algorithm \ref{algo3}. The CVaR$_{59\%}$ and $\gamma_0=0$ provide the best results for the mismatches since the $CR$ reaches a minimal value of 1.14\%. The retention levels $c$ between 45\% and 65\% perform well. The threshold $\gamma_0$ does not seem to influence the performance of the hedging portfolio when the evaluation is based on $CR$. For the rest of the numerical examples, the parameter $\gamma_0$ is set to 0. 

\begin{figure}[h!]
 \includegraphics[scale=0.75] {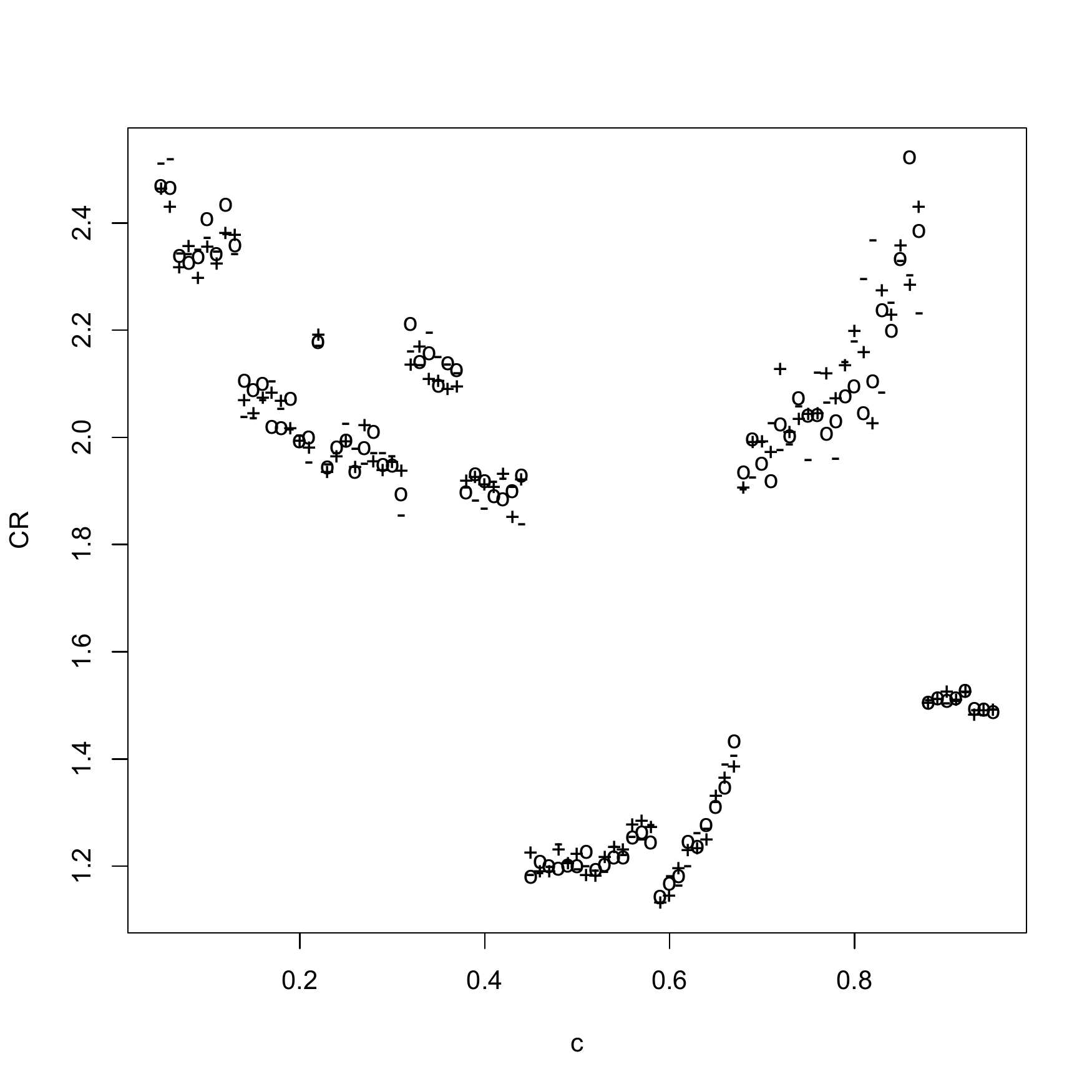}
 \caption{$CR$ for different retention level $c$. The symbols $+$, $-$, and o refer to $\gamma_0$ equal to $0$, $1\%$, and $2\%$, respectively. }
 \label{rgic1_c}
\end{figure}
%matplot(5:95/100,t(rgic1_c4),xlab="c",ylab="CR",main=c(),pch=c("+","-","o"),col=1) 
%legend("top",legend=c("+ 0%, - 1%, o 2%"),bty="n")

Figure \ref{gic1} shows the distribution of the hedging errors where the retention level is fixed to $c=59\%$. This algorithm provides a left-skewed distribution for the mismatches, where the issuer possessing 0.87\% of capital will face losses with a probability of 5\%. The initial value to start the hedging portfolio $F_{i_0}$ is 1.01. The expected gain from the transaction is 0.33\% with a fairly small standard deviation (1.29\%).

\begin{figure}[h!]
 \includegraphics[scale=0.75] {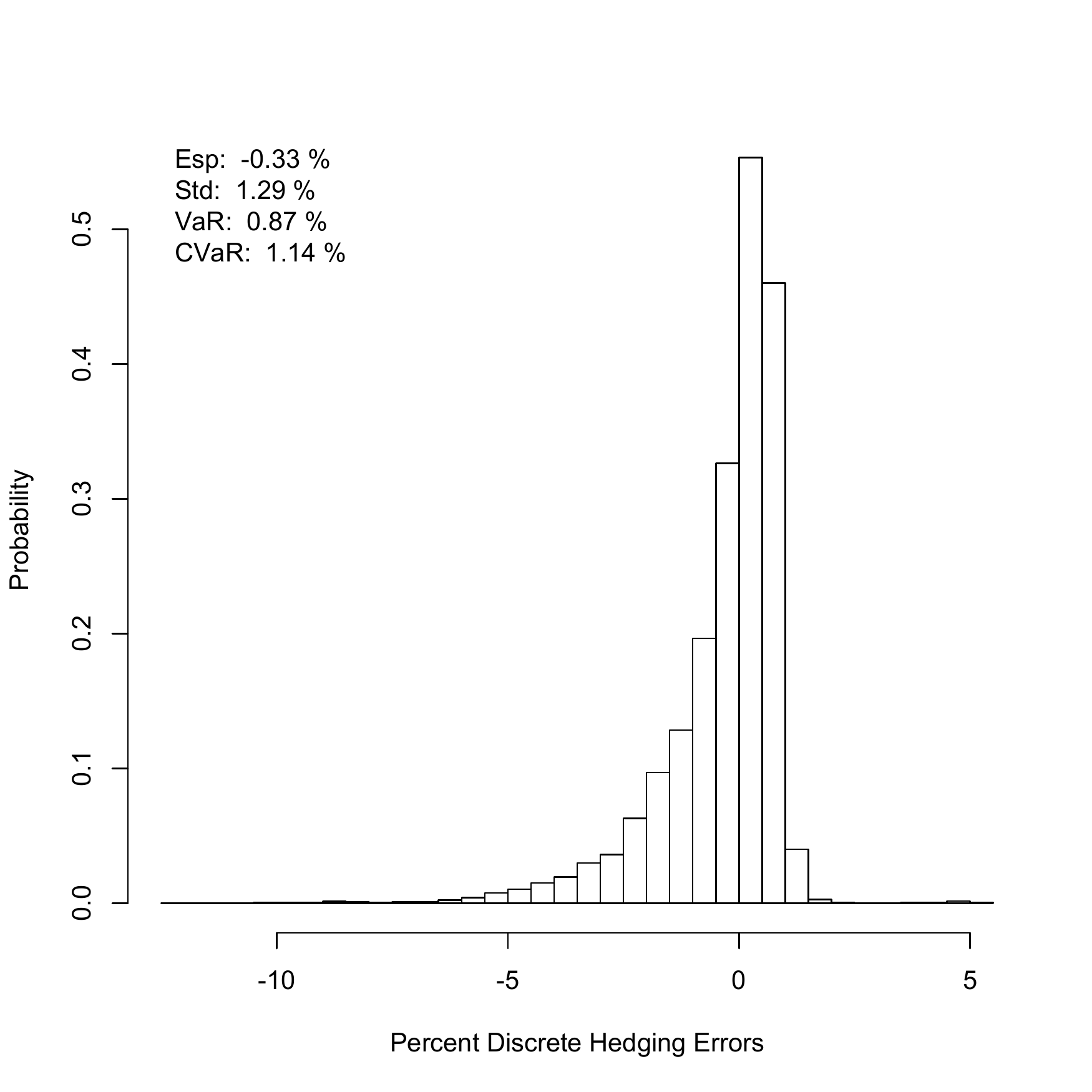}
 \caption{Distribution of the hedging errors for $c=59\%$.}
 \label{gic1}
\end{figure}
%gic1(cc=0.59)

%n2
Table \ref{tabn1} gives the hedging statistics for different numbers of trading dates ($2, 4, 6, 8,12, 24, 52$).  All parameters in the tables are reported in percent. As explained in Gaillardetz and Moghtadai (2017), changing parameters requires retuning the risk measure retention level.  The table provides optimized results for the risk measure threshold. The mismatches are generated for different $c$ $(5\%, 10\%, ,\cdots, 95\%)$ and the lowest $CR$ obtained from the simulations is presented as well as the risk level that provided this result. Table \ref{tabn1} shows that increasing the number of trading dates impact the distribution of the mismatches. The first observation is that the amount $CR$ is decreasing as the number of re-balancing is increasing since the replicating portfolio controls the risk more frequently. This is not observable for small numbers of trading dates. It is clear that the optimal hedging strategy is obtained using a risk measure threshold $c$ equal to 50\% for large numbers of trading dates. The decrease is steady since high levels are used for small numbers of trading dates.
 \begin{center}
\begin{table}[htbp]
  \centering

  \begin{tabular}{c||c  c  c  c c c c   }%\cline{2-13} 
  \hline
 $T$ &2&4&6&8&12&24&52 \\
    \hline
 
%     \multicolumn{1}{c}{}  & \multicolumn{7}{c}{number of trading dates}\\
%    \hline
$c$	&	95	&	95	&	95	&	65	&	60	&	50	&	50	\\ 
$CR$	&	1.05	&	1.58	&	1.80	&	1.80	&	1.14	&	1.02	&	0.84	\\ \hline
\end{tabular}%
    \caption{Results for different numbers of trading dates or periods $T$.}
  \label{tabn1}%
\end{table}%\end{center}
\end{center}

The number of nodes is also affecting the algorithm performance. In Table \ref{tabn2}, we set the number of nodes $N$ to 2, 4, 6, 8, 12, 24, and 52. The measurement $CR$ is increasing with the number of nodes since increasing $N$ increases the tails of the index distribution. It is not clear which risk measure threshold that should be used for large $N$.  

\begin{center}
\begin{table}[htbp]
  \centering

  \begin{tabular}{c|c  c  c  c c c c   }%\cline{2-13} 
  \hline
  $N$&2&4&6&8&12&24&52 \\
    \hline
 
$c$	&	40	&	90	&	60	&	40	&	60	&	70	&	55	\\
$CR$	&	0.96	&	1.80	&	1.14	&	1.23	&	1.32	&	1.38	&	1.44	\\ \hline
\end{tabular}%
    \caption{The amount $CR$ for different numbers of nodes $N$.}
  \label{tabn2}%
\end{table}%\end{center}
\end{center}

%g,cap,T
Table \ref{tgg} shows the value $CR$ for different GIC contracts. The algorithm is performed under different maturities ($T=24, 36$), cap rates ($\zeta=5\%, 7\%$) as well as guaranteed rates ($g= 1\%, 2\%$).  Extending the contract maturity does require more capital since the guarantee is expended over several years. However, the amount $CR$ is still reasonable going from 1.14\% for one year to 1.87\% for 3 years (first block). The cap rate is the parameter that influences the most $CR$. It is linked with the tail of the possible return, which is limited by the cap rates. The last block shows that the algorithm is fairly robust when increasing the guaranteed return, which influences the floor of the possible return.     

\begin{center}
\begin{table}[htbp]
  \centering

  \begin{tabular}{c|c  c || c|  c c ||c|c c   }%\cline{2-13} 
  \hline
  $T$ &$c$& $CR$&$\zeta$ &$c$& $CR$&$g$ &$c$& $CR$ \\
    \hline
 
24	&	45	&	1.71	&	5\%	&	60	&	0.73	&	1\%	&	60	&	1.39	\\
36	&	45	&	1.87	&	7\%	&	45	&	1.67	&	2\%	&	45	&	1.62	\\ \hline
\end{tabular}%
    \caption{Results for different GIC contracts.}
  \label{tgg}%
\end{table}%\end{center}
\end{center}

%mu, r, and vol
A sensitivity analysis is performed in Table \ref{tpar} for diverse financial parameters. The volatility is set to 15\% and 25\%. The short rate and expected return $\mu$ vary between 2\% and 4\% and between 7\% and 9\%, respectively. The values $CR$ are fairly stable to volatility variations (see second block) and the algorithm is robust to the change of the expected return (see first block). Even though the underlying contract has a 1-year maturity, the changes in the short rate have major impacts on the portfolio initial values and the performance of the strategies as seen in the third block.  

%option
Table \ref{tpar} stresses the impact of introducing the European option in our hedging strategy. The last block shows the amount $CR$ for the hedging strategy without option. For a one year contract, the issuer reduces the $CR$ from $1.86$ to $1.14$ with the options. The hedging strategy outperforms the approach proposed by \cite{Gaillardetz17}. 

\begin{center}
\begin{table}[htbp]
  \centering

  \begin{tabular}{c|c  c || c|  c c ||c|c c ||c|c c   }%\cline{2-13} 
  \hline
  $\mu$ &$c$& $CR$&$\sigma$ &$c$& $CR$&$r$ &$c$& $CR$ &Option &$c$& $CR$\\
    \hline
 
7\%	&	60	&	1.16	&	15\%	&	45	&	1.13	&	2\%	&	60	&	2.04	&	with	&	59	&	1.14	\\
9\%	&	50	&	1.17	&	25\%	&	60	&	1.46	&	4\%	&	60	&	0.31	&	without	&	60	&	1.86	\\ \hline
\end{tabular}%
    \caption{Sensitivity analysis.}
  \label{tpar}%
\end{table}%\end{center}
\end{center}

%ei
The issuer could be interested to diminish the left part of the hedging errors tails from Figure \ref{gic1}. Using the path wise hedging error as a state variable, (\ref{eiT}) and (\ref{eit}), the issuer could restrain the aggregate error by using the parameter $\gamma_3$. Figure \ref{gice1} shows the present value hedging error under the same setting as Figure (\ref{gic1}) except that we limit each error path to 3\%. This strategy possesses similar statistics related to the hedging errors (expected gain, standard deviation, VaR, and $CR$) than the other standard strategy. The difference between the strategies is the range of the right tail; the model removes the losses above 3\% by replacing high losses with more frequent smaller losses.  

\begin{figure}[h!]
 \includegraphics[scale=0.75] {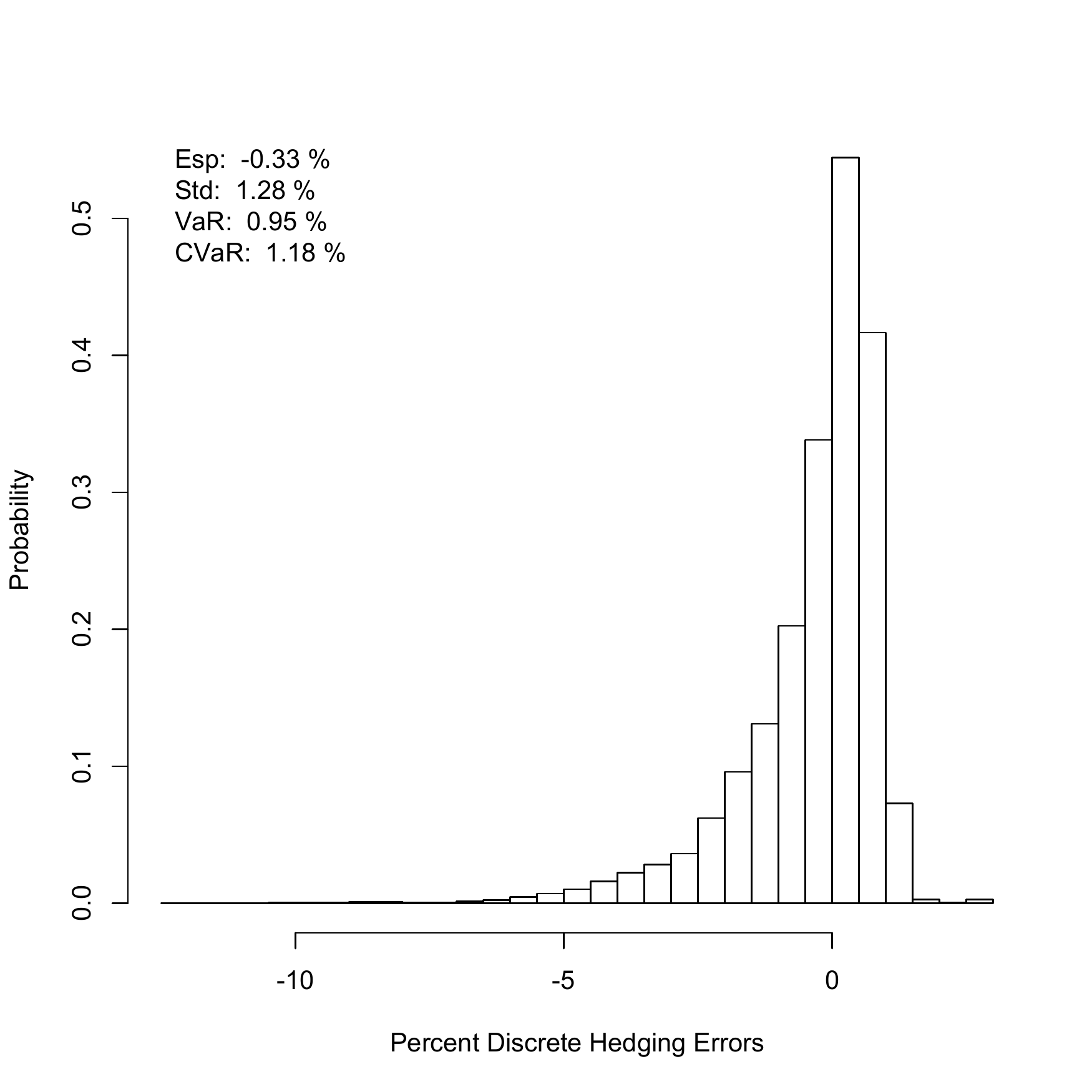}
 \caption{Distribution of the hedging errors for $c=59\%$ and $\gamma_3=3\%$.}
 \label{gice1}
\end{figure}

\subsubsection{Risk Measure as Objective}

The Algorithms \ref{algo5}, \ref{algo3}, and \ref{algo4} are also implemented for the default GIC contract. Since the Algorithms \ref{algo5} and \ref{algo4} do not impose a lid on risk measure, we notice that for some nodes the optimization results yield to unacceptable values for risk measure. In order to make these algorithms more efficient we need additional local control on the mismatches  (\ref{cnormp})-(\ref{cnorm2}) or path wise constraints (\ref{eiT})-(\ref{eit}); these constrains reduce the highest losses in each transition. For example, the Algorithm \ref{algo4} is able to perform well when the local CVaR of the hedging errors is also limited to $0$. Figure \ref{gic3} gives the distribution of the hedging errors for Algorithm \ref{algo4}, where the risk measure level is optimized and set to 59\%. The hedging statistics are similar to the ones obtained from Algorithm \ref{algo2} with an initial value of 1.03 for the replicating portfolio.  

Algorithm \ref{algo3} performs well but slightly less good than Algorithm \ref{algo4} for this product. We omit to report the results for the sake of compactness.

%Based on our numerical analyses the Algorithms \ref{algo5}, and \ref{algo3} require additional local or global controls. Constraints that control the norm of the errors  (\ref{cnormp}) or the aggregate errors (\ref{eiT}) could be added to reduce the risk. The Algorithm \ref{algo4} quite well 

%The algorithms require additional control when the risk measure is used in the objective function.

%  In addition to their respective constraints, the local CVaR of the loss is also limited to $\gamma_0$. The Algorithm \ref{algo3} and \ref{algo5}  were not able to perform well and did not provide competitive hedging statistics compared to Algorithm \ref{algo2} for this GIC contract. %Double check numbers
\begin{figure}[h!]
 \includegraphics[scale=0.75] {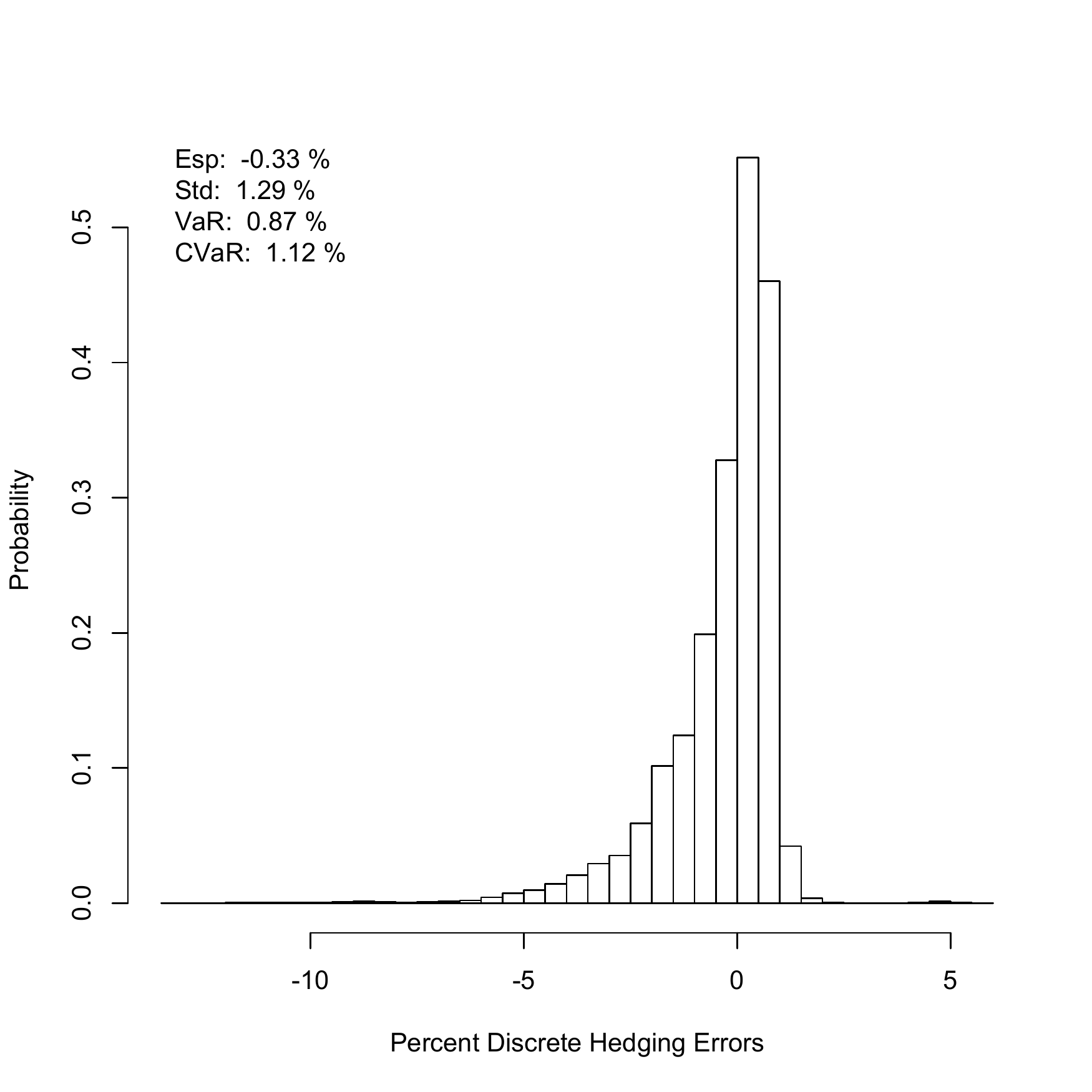}
 \caption{Distribution of the hedging errors for $c=59\%$ under the coherent risk measure approach.}
 \label{gic3}
\end{figure}

% and did not provide competitive hedging statistics compared to Algorithm \ref{algo2} for this GIC contract
%These algorithms provide the issuer some flexibilities in the choice of the hedging portfolio. We opt for a self-financing strategy, where we start with one unit and use the accumulated value as a starting value. Some adjustments are sometimes necessary to remain in the feasible region.

\subsection{Equity-Indexed Annuities}

Equity-indexed annuities are similar to GICs since they are linked to the performance of equity index while offering protections. However, the investors may usually opt for diverse financial guarantees, such as \emph{Guaranteed Minimum Death
Benefits} (GMDB) as well as \emph{Guaranteed Minimum Living Benefits} (GMLB).  %The simplest form is the \emph{Return of Premium Death Benefit} where the maximum of the current account value at the time of death and the single premium is paid.
%The risk profile of the investor is set by the specific selection of mutual funds.

The guaranteed minimum death benefits (GMDB) pays a guaranteed amount in case the insured dies during the deferred period. The guaranteed minimum living benefits (GMLB) are separated into three types: \emph{Guaranteed Minimum Accumulation Benefits} (GMAB), \emph{Guaranteed Minimum Income Benefits} (GMIB) and more recently \emph{Guaranteed Minimum Withdrawal Benefit} (GMWB). GMAB is the simplest form of these benefits, where the insured is entitled to the single premium or a roll-up benefit base at maturity. The GMIB offers the choice to obtain the account value, annuitize the account value or annuitize some guaranteed amount at specified rates. The GMWB gives the possibility to withdraw a certain amount in small portions annually. See the monograph by \cite{Hardy03} for comprehensive discussions on these guarantees. The focus will be on GMDB and GMAB guarantees, where the hedging portfolio can be obtained using the proposed algorithms. 

To illustrate an equity-linked product valuation, we consider one of the simplest
design of EIAs, known as the point-to-point with term-end design where the index growth is based on the growth between two time points over the entire term of the annuity. This design has embedded GMDB and GMAB insurances with the payoff at time $k$ represented by
\begin{align}
P_{i_t}=\max \left[\min \left[ 1+\alpha R_{i_t},(1+\zeta)^{t/12}\right],\beta(1+g)^{t/12} \right],\label{ptp}
\end{align}
for $t=1,\cdots,T$, where $\alpha$ is the participation in the index and the return is given by (\ref{RT}). EIAs provide participation in the index return at the level $\alpha$ as well as protection against the loss from a down market $\beta(1+g)^k$. The main difference between this EIA and the GIC is that the EIA contract terminates when the investor died and it is assumed that the insurance company pay the benefit at the end of the period. In this case, the continuation value 
\[
G_{i_t}(P_{i_t},C_{i_t})=P_{i_t}1_{\{T_x\in[t,t+1)\}} +C_{i_t} 1_{\{T_x\geq t+1\}},
\]
for $t=1,...,T-1$ and $C_{i_T} =P_{i_T}$, where $1_{\{ .\} }$ is an indicator function and $T_x$ is the future lifetime (in terms of periods) of the policyholder ($x$).

The binomial structure remains the same for the index. The stochastic process has 2 dimensions: the index process and the cohort evolution. It is assumed that both processes are independent and that for sake of simplicity our insurance portfolio consists of one policyholder. Hence, the cohort process has two states: death or alive with probability $q$ and $1-q$, respectively.

The cardinality of the transitions set $I_t|i_{t-1}$ is $(N+1)*2$. The conditional possible values for $\xi^t_i $ given node $i_{t-1}$ are
\begin{eqnarray}
\left\{\begin{array}{l l }
 (S_{i_{t-1}} u^{j}~ d^{N-j},1_{\{T_x\in[t-1,t)\}}) & j=0,1,\cdots, N\\
 (S_{i_{t-1}} u^{j-N-1}~ d^{2N+1-j},1_{\{T_x\geq t\}} )& j=N+1,N+2,\cdots, 2N+1.
\end{array}\right.
\end{eqnarray}
The first $N+1$ transitions are related to the death of the policyholder in the given period and the following $N+1$ nodes are linked to the survival of $(x)$. The corresponding conditional probabilities $p_{i_t|i_{t-1}}$ are successively
\begin{eqnarray}
\left\{\begin{array}{l l }
 \left(
\begin{array}{c}
  N \\
j\\
\end{array}%
\right) \pi ^{j} [1-\pi]^{N-j}~ q_{t-1}, & j=0,1,\cdots, N.\\

 \left(
\begin{array}{c}
  2N+1 \\
 j-N-1\\
\end{array}%
\right) \pi ^{j-N-1} [1-\pi]^{2N+1-j} (1-q_{t-1}), & j=N+1,N+2,\cdots, 2N+1,
\end{array}\right.
\end{eqnarray}
where $q_{t-1}=\Pr[t-1\leq T_x <t ~|T_x\geq t-1]$ .

In absence of indication to the contrary, we assume the following for our analysis: 5-year contracts $(T=60)$, $\alpha$=50\%, $\beta=100\%$, $g=0\%$, and $\zeta=\infty$. In the case of EIAs, it is also assumed that the policyholder is 50 years old and his mortality follows that of the illustrative table in \cite{Bowers97}. 

\subsubsection{Risk Measure as Constraint}

The same strategy is used to determine the risk measure level, which means that the statistic $CR$ is obtained for $c$ between 5\% and 95\% and the hedging strategy that provides the smallest value is used as reference. Figure \ref{eia1} gives the distribution of the hedging error when $c=50\%$. The replicating portfolio initial value is 1.00, which is exactly the initial investment from the policyholder. The issuer expects a gain of 1.56\% using this strategy and the expected value of 5\% of the worst losses is only 0.24\%.  

\begin{figure}[h!]
 \includegraphics[scale=0.75] {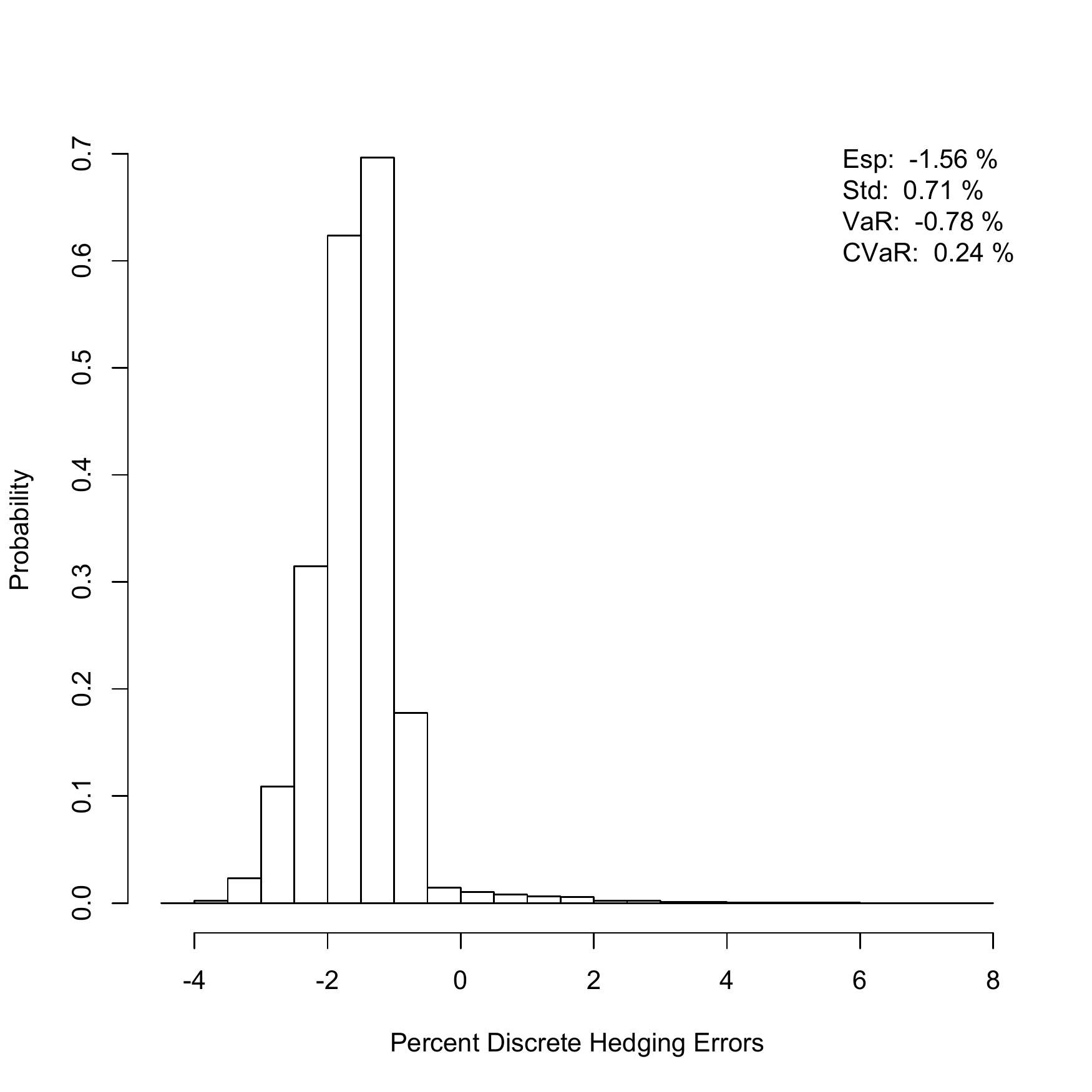}
 \caption{Distribution of the hedging errors for $c=50\%$ using the risk measure as a constraint.}
 \label{eia1}
\end{figure}
   
The issuer could limit the left tail by imposing a limit on the aggregate loss of $\gamma_3=6\%$. The results are displayed in Figure \ref{eiae1}. The hedging statistics are very good even though they are inferior as the ones extracted from Algorithm \ref{algo2} without a state variable. However, the strategy does not admit any large positive deviations.     

\begin{figure}[h!]
 \includegraphics[scale=0.75] {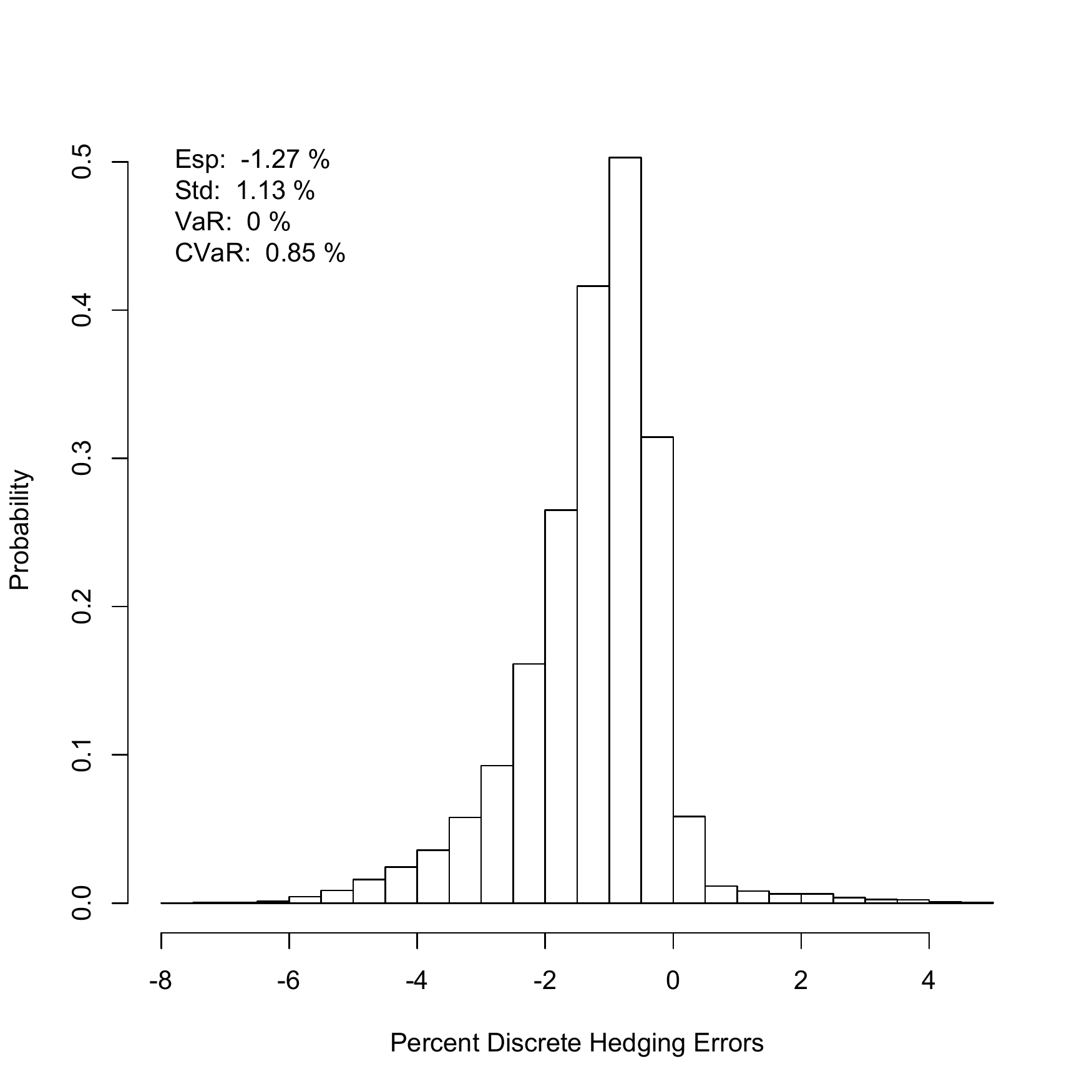}
 \caption{Distribution of the hedging errors for $c=50\%$ using the risk measure as a constraint where $\gamma_3=6\%$.}
 \label{eiae1}
\end{figure}

We also stress the impact of introducing the European option in our hedging strategy for EIAs. Removing the option increases the initial value of the hedging portfolio to 1.01 (compared to 1.00) and the amount $CR$ to 0.85\% (compared to 0.24\%). The impact of introducing the option could be enhance depending on the parameters selections. For example, setting the risk measure threshold to $c=95\%$ results to an initial value for the hedging portfolio of 1.02 with the option and 1.08 without the option.

\subsubsection{Risk Measure as Objective}

Algorithm \ref{algo4} is performed for the EIA contract. Figure \ref{eia3} shows the hedging error distribution where the risk measure is optimized and set to 56\%. The hedging statistics are similar to the ones obtained from Figure \ref{eia1}. Recall that the initial value of the replicating portfolio is one unit. The relative performance of the other algorithms is similar as what it is reported for the GIC analyses.

\begin{figure}[h!]
 \includegraphics[scale=0.75] {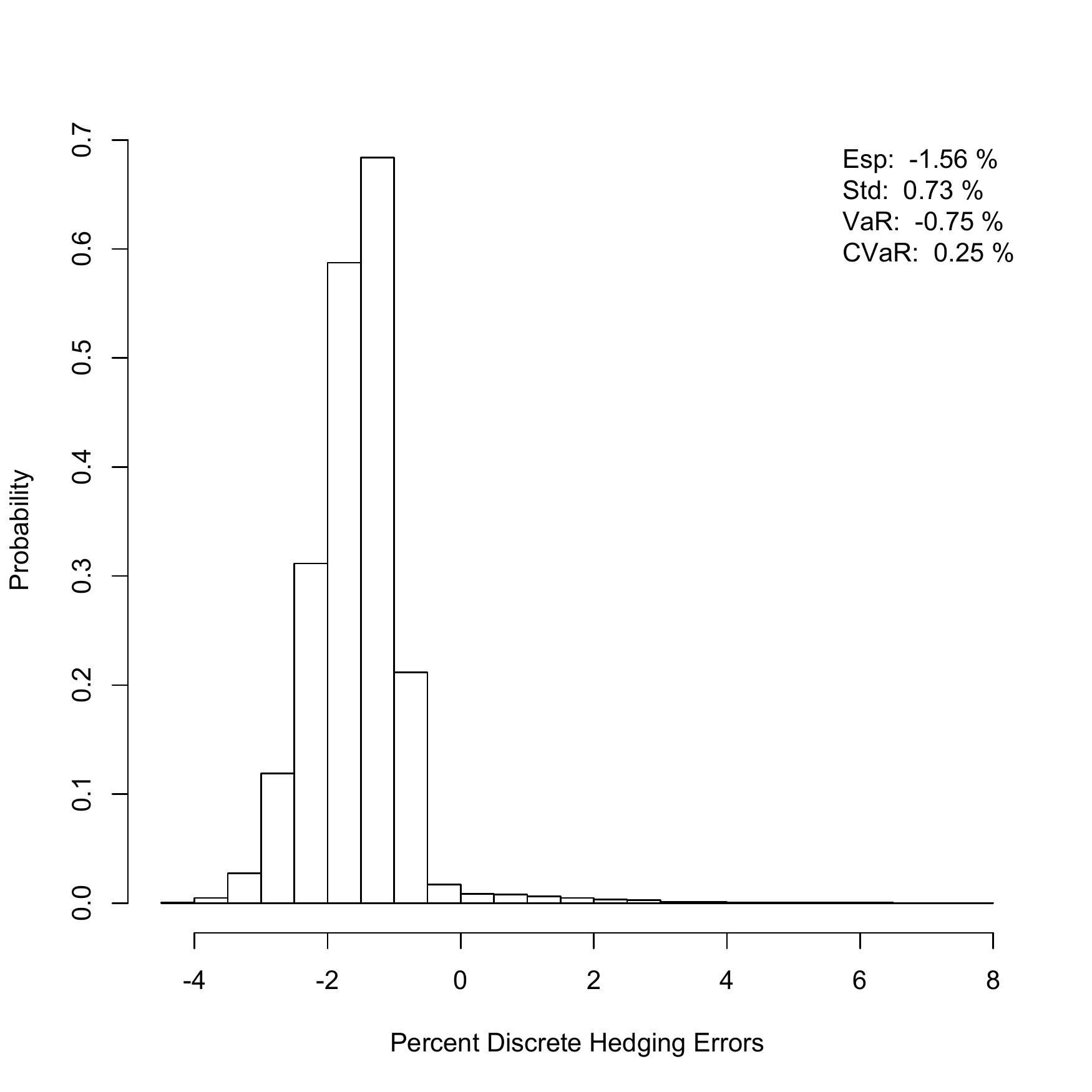}
 \caption{Distribution of the hedging errors for $c=56\%$ under coherent risk measure.}
 \label{eia3}
\end{figure}

\section{Conclusions}

The purpose of this paper is to introduce different risk control models and test their performance on two contingencies: GICs and EIAs.  Our proposed models are based on the control of the discounted loss function. Risk measures are used as instruments to limit the issuer's losses. Two different families of models are presented. On the one hand, the risk measure is used as in the constraints to limit the local risk and on the other hand, the risk measures are used in the objective function. The latter family requires additional local control modelling such as in the second half of Subsection \ref{rmc}. 

An analysis is performed on the distribution of path wise mismatches. A criterion on this distribution is introduced to asses the strategy performance and allows to calibrate the risk measure parameters.

The proposed models and associated algorithms allow the involvement of multiple assets investment.  In our numerical analyses, we allow the issuer to invest in a European call option. This addition shows a great impact on reducing the initial values of the replicating portfolio and improves the hedging strategies.

The algorithms work for recombining trees and unfolded trees where the contingent claim depends on its historic. Hence, the financial framework could be modelled using Markovian or non-Markovian processes.

The tested problems are open to more complex models such as adding transaction costs (Section \ref{hplf}, third paragraph) and or adding more risk control tools as in Subsection \ref{rmc}. Among these tools, the implementation of the control of path wise total losses improves the hedging strategies by limiting the positive range of the error distribution.
 
The special case where the risk measure threshold $c$ is set to 1 leads to the super-replicating strategy. According to this strategy the issuer does not incur any risk. With slightly modified models, other financial derivatives such as American type options can be evaluated. We are testing these modified models and comparing them to benchmark prices.

\vskip0.5cm
\noindent{\Large\bf Acknowledgments}\\
This research was supported by the Natural Sciences and Engineering Research Council of Canada. %The authors are very grateful to the anonymous referees for their valuable comments and suggestions.

\bibliographystyle{plainnat}
\bibliography{bibfile}

\begin{thebibliography}{30}
\providecommand{\natexlab}[1]{#1}
\providecommand{\url}[1]{\texttt{#1}}
\expandafter\ifx\csname urlstyle\endcsname\relax
  \providecommand{\doi}[1]{doi: #1}\else
  \providecommand{\doi}{doi: \begingroup \urlstyle{rm}\Url}\fi

\bibitem[Abergel and Millot(2011)]{Abergel11}
F.~Abergel and N.~Millot.
\newblock Nonquadratic local risk-minimization for hedging contingent claims in
  incomplete markets.
\newblock \emph{SIAM Journal on Financial Mathematics}, 2\penalty0
  (1):\penalty0 342--356, 2011.

\bibitem[Artzner et~al.(1999)Artzner, Delbaen, Eber, and Heath]{Artzner99}
P.~Artzner, F.~Delbaen, J.-M. Eber, and D.~Heath.
\newblock Coherent measures of risk.
\newblock \emph{Mathematical Finance}, 9\penalty0 (3):\penalty0 203--228, 1999.

\bibitem[Ben-Tal et~al.(2009)Ben-Tal, Ghaoui, and Nemirovski]{Aharon}
A.~Ben-Tal, L.~El Ghaoui, and A.~Nemirovski.
\newblock \emph{Robust optimization}.
\newblock Princeton University Press, 2009.

\bibitem[Bereanu(1964)]{Bereanu}
B.~Bereanu.
\newblock Regions de decision et repartitionde l'optimum dans la programmation
  lineaire.
\newblock \emph{C. R. Acad. Sci. Paris}, 1383-1386, 1964.

\bibitem[Black and Scholes(1973)]{BS73}
F.~Black and M.~Scholes.
\newblock The pricing of options and corporate liabilities.
\newblock \emph{The Journal of Political Economy}, 81\penalty0 (3):\penalty0
  637--654, 1973.

\bibitem[Borrelli et~al.(2003)Borrelli, Bemporad, and Morari]{Borrelli03}
F.~Borrelli, A.~Bemporad, and M.~Morari.
\newblock A geometric algorithm to multi-parametric linear programming.
\newblock \emph{Journal of Optimization Theory and Applications}, 118\penalty0
  (3):\penalty0 515--540, 2003.

\bibitem[Bowers(1997)]{Bowers97}
N.L. Bowers.
\newblock \emph{Actuarial Mathematics}.
\newblock Society of Actuaries, 1997.
\newblock ISBN 9780938959465.

\bibitem[Charnes et~al.(1958)Charnes, Cooper, and Symonds]{charnes}
A.~Charnes, W.W. Cooper, and G.H. Symonds.
\newblock Cost horizons and certainty equivalents: an approach to stochastic
  programming of heating oil production.
\newblock \emph{Management science}, 4:\penalty0 236--263, 1958.

\bibitem[Coleman et~al.(2006)Coleman, Li, and Patron]{Coleman06}
T.F. Coleman, Y.~Li, and M.-C. Patron.
\newblock Hedging guarantees in variable annuities under both equity and
  interest rate risks.
\newblock \emph{Insurance: Mathematics and Economics}, 38\penalty0
  (2):\penalty0 215--228, 2006.

\bibitem[Cox et~al.(1979)Cox, Ross, and Rubinstein]{crr79}
J.C. Cox, S.A. Ross, and M.~Rubinstein.
\newblock Option pricing: A simplified approach.
\newblock \emph{Journal of financial Economics}, 7\penalty0 (3):\penalty0
  229--263, 1979.

\bibitem[Dempster(1980)]{Foote80}
M.A.H. Dempster, editor.
\newblock \emph{A Discussion of the Properties of a Basic Simplex Algorithm for
  Generating the Decision Regions of Bereanu.}, 1980. Academic Press.

\bibitem[F\"ollmer and Leukert(2000)]{Follmer00}
H.~F\"ollmer and P.~Leukert.
\newblock Efficient hedging: Cost versus shortfall risk.
\newblock \emph{Finance and Stochastics}, 4\penalty0 (2):\penalty0 117--146,
  2000.

\bibitem[F\"ollmer and Schweizer(1988)]{Follmer88}
H.~F\"ollmer and M.~Schweizer.
\newblock Hedging by sequential regression: An in- troduction to the
  mathematics of option trading.
\newblock \emph{Astin Bulletin}, 2\penalty0 (147-160), 1988.

\bibitem[F\"ollmer and Sondermann(1986)]{Follmer86}
H.~F\"ollmer and D.~Sondermann.
\newblock Hedging of non-redundant contingent claims.
\newblock \emph{Contributions to Mathematical Economics}, 1986.

\bibitem[F\"ollmer and Leukert(1999)]{Follmer99}
Hans F\"ollmer and Peter Leukert.
\newblock Quantile hedging.
\newblock \emph{Finance and Stochastics}, 3\penalty0 (3):\penalty0 251--273,
  1999.

\bibitem[Gaillardetz and Moghtadai(2017)]{Gaillardetz17}
P.~Gaillardetz and M.~Moghtadai.
\newblock Partial hedging for equity-linked products using risk-minimizing
  strategies.
\newblock \emph{North American Actuarial Journal}, 21\penalty0 (4):\penalty0
  580--593, 2017.

\bibitem[Gal and Nedoma(1972)]{Gal72}
T.~Gal and J.~Nedoma.
\newblock Multiparametric linear programming.
\newblock \emph{Management Science}, 18\penalty0 (406-442), 1972.

\bibitem[Hardy(2003)]{Hardy03}
M.~Hardy.
\newblock \emph{Investment guarantees: modeling and risk management for
  equity-linked life insurance}.
\newblock John Wiley \& Sons, 2003.

\bibitem[Jonesa et~al.(2007)Jonesa, Kerriganb, and Maciejowskic]{Jonesa07}
C.N. Jonesa, E.C. Kerriganb, and J.M. Maciejowskic.
\newblock Lexicographic perturbation for multiparametric linear programming
  with applications to control.
\newblock \emph{Automatica}, 43:\penalty0 1808--1816, 2007.

\bibitem[Kucukyavuz(2012)]{Kucukyavuz}
S.~Kucukyavuz.
\newblock On mixing sets arising in chance-constrained programming.
\newblock \emph{Mathematical Programming}, 132:\penalty0 31--56, 2012.

\bibitem[Luedtke et~al.(2010)Luedtke, Ahmed, and Nemhauser]{Luedtke}
J.~Luedtke, S.~Ahmed, and G.~Nemhauser.
\newblock An integer programming approach for linear programs with
  probabilistic constraints.
\newblock \emph{Mathematical Programming}, 122\penalty0 (2):\penalty0 247--272,
  2010.

\bibitem[Markowitz(1959)]{Markowitz}
H.~Markowitz.
\newblock \emph{Portfolio Selection: Efficient Diversification of Investmen}.
\newblock New York: John Wiley, 1959.

\bibitem[Miller and Wagner(1965)]{Miller}
B.L. Miller and H.M. Wagner.
\newblock Chance constrained programming with joint constraints.
\newblock \emph{Operations Research}, 13:\penalty0 930--965, 1965.

\bibitem[Prekopa(1973)]{Prekopa73}
A.~Prekopa.
\newblock Contributions to the theory of stochastic programming.
\newblock \emph{Mathematical Programming}, 4:\penalty0 201--221, 1973.

\bibitem[Prekopa(1990)]{Prekopa}
A.~Prekopa.
\newblock Dual method for the solution of a one-stage stochastic programming
  problem with random rhs obeying a discrete probability distribution.
\newblock \emph{ZOR Methods Models Operations Research}, 34:\penalty0 441--461,
  1990.

\bibitem[Riedel(2004)]{Riedel04}
F.~Riedel.
\newblock Dynamic coherent risk measures.
\newblock \emph{Stochastic Processes and their Applications}, 112:\penalty0
  185--200, 2004.

\bibitem[Rockafellar(1970)]{Rockafellar70}
R.T. Rockafellar.
\newblock \emph{Convex Analysis}.
\newblock Princeton University Press, 1970.

\bibitem[Rockafellar and Uryasev(2000)]{Rockafellar}
R.T. Rockafellar and S.~Uryasev.
\newblock Optimization of conditional value-at-risk.
\newblock \emph{Journal of Risk}, 2:\penalty0 21--41, 2000.

\bibitem[Ruszczynski(2002)]{Ruszczynski02}
A.~Ruszczynski.
\newblock Probabilistic programming with discrete distributions and precedence
  constrained knapsack polyhedra.
\newblock \emph{Mathematical Programming}, 93:\penalty0 195--215, 2002.

\bibitem[Schweizer(1988)]{Schweizer88}
M.~Schweizer.
\newblock \emph{Hedging of options in a general semimartingale model}.
\newblock PhD thesis, ETH Zurich, 1988.

\end{thebibliography}

 \end{document}